\newtheorem{definition}{Definition}
\newtheorem{proposition}[definition]{Proposition}
\newtheorem{lemma}[definition]{Lemma}
\newtheorem{theorem}[definition]{Theorem}
\newtheorem{corollary}[definition]{Corollary}
\newtheorem{conjecture}[definition]{Conjecture}
\newtheorem{remark}[definition]{Remark}
\newtheorem{example}[definition]{Example}
\newtheorem{question}[definition]{Question}
\newtheorem{memo}[definition]{Memo}
\def\squareforqed{\hbox{\rlap{$\sqcap$}$\sqcup$}}
\def\qed{\ifmmode\squareforqed\else{\unskip\nobreak\hfil
\penalty50\hskip1em\null\nobreak\hfil\squareforqed
\parfillskip=0pt\finalhyphendemerits=0\endgraf}\fi}
\def\endenv{\ifmmode\;\else{\unskip\nobreak\hfil
\penalty50\hskip1em\null\nobreak\hfil\;
\parfillskip=0pt\finalhyphendemerits=0\endgraf}\fi}
\newenvironment{proof}{\noindent \textbf{{Proof.~} }}{\qed}
\def\Dbar{\leavevmode\lower.6ex\hbox to 0pt
{\hskip-.23ex\accent"16\hss}D}
\def\url@leostyle{%
  \@ifundefined{selectfont}{\def\UrlFont{\sf}}{\def\UrlFont{\small\ttfamily}}}
\def\bcj{\begin{conjecture}}
\def\ecj{\end{conjecture}}
\def\bcr{\begin{corollary}}
\def\ecr{\end{corollary}}
\def\bd{\begin{definition}}
\def\ed{\end{definition}}
\def\bea{\begin{eqnarray}}
\def\eea{\end{eqnarray}}
\def\bem{\begin{enumerate}}
\def\eem{\end{enumerate}}
\def\bex{\begin{example}}
\def\eex{\end{example}}
\def\bim{\begin{itemize}}
\def\eim{\end{itemize}}
\def\bl{\begin{lemma}}
\def\el{\end{lemma}}
\def\bma{\begin{bmatrix}}
\def\ema{\end{bmatrix}}
\def\bpf{\begin{proof}}
\def\epf{\end{proof}}
\def\bpp{\begin{proposition}}
\def\epp{\end{proposition}}
\def\bqu{\begin{question}}
\def\equ{\end{question}}
\def\br{\begin{remark}}
\def\er{\end{remark}}
\def\bt{\begin{theorem}}
\def\et{\end{theorem}}
\def\bmm{\begin{memo}}
\def\emm{\end{memo}}
\def\btb{\begin{tabular}}
\def\etb{\end{tabular}}
\newcommand{\nc}{\newcommand}
\def\b{\beta}
\def\g{\gamma}
\def\r{\rho}
\def\ps{\psi}
\def\G{\Gamma}
 \nc{\bbA}{\mathbb{A}} \nc{\bbB}{\mathbb{B}} \nc{\bbC}{\mathbb{C}}
 \nc{\bbD}{\mathbb{D}} \nc{\bbE}{\mathbb{E}} \nc{\bbF}{\mathbb{F}}
 \nc{\bbG}{\mathbb{G}} \nc{\bbH}{\mathbb{H}} \nc{\bbI}{\mathbb{I}}
 \nc{\bbJ}{\mathbb{J}} \nc{\bbK}{\mathbb{K}} \nc{\bbL}{\mathbb{L}}
 \nc{\bbM}{\mathbb{M}} \nc{\bbN}{\mathbb{N}} \nc{\bbO}{\mathbb{O}}
 \nc{\bbP}{\mathbb{P}} \nc{\bbQ}{\mathbb{Q}} \nc{\bbR}{\mathbb{R}}
 \nc{\bbS}{\mathbb{S}} \nc{\bbT}{\mathbb{T}} \nc{\bbU}{\mathbb{U}}
 \nc{\bbV}{\mathbb{V}} \nc{\bbW}{\mathbb{W}} \nc{\bbX}{\mathbb{X}}
 \nc{\bbZ}{\mathbb{Z}}
 \nc{\bA}{{\bf A}} \nc{\bB}{{\bf B}} \nc{\bC}{{\bf C}}
 \nc{\bD}{{\bf D}} \nc{\bE}{{\bf E}} \nc{\bF}{{\bf F}}
 \nc{\bG}{{\bf G}} \nc{\bH}{{\bf H}} \nc{\bI}{{\bf I}}
 \nc{\bJ}{{\bf J}} \nc{\bK}{{\bf K}} \nc{\bL}{{\bf L}}
 \nc{\bM}{{\bf M}} \nc{\bN}{{\bf N}} \nc{\bO}{{\bf O}}
 \nc{\bP}{{\bf P}} \nc{\bQ}{{\bf Q}} \nc{\bR}{{\bf R}}
 \nc{\bS}{{\bf S}} \nc{\bT}{{\bf T}} \nc{\bU}{{\bf U}}
 \nc{\bV}{{\bf V}} \nc{\bW}{{\bf W}} \nc{\bX}{{\bf X}}
 \nc{\bZ}{{\bf Z}}
\nc{\cA}{{\cal A}} \nc{\cB}{{\cal B}} \nc{\cC}{{\cal C}}
\nc{\cD}{{\cal D}} \nc{\cE}{{\cal E}} \nc{\cF}{{\cal F}}
\nc{\cG}{{\cal G}} \nc{\cH}{{\cal H}} \nc{\cI}{{\cal I}}
\nc{\cJ}{{\cal J}} \nc{\cK}{{\cal K}} \nc{\cL}{{\cal L}}
\nc{\cM}{{\cal M}} \nc{\cN}{{\cal N}} \nc{\cO}{{\cal O}}
\nc{\cP}{{\cal P}} \nc{\cQ}{{\cal Q}} \nc{\cR}{{\cal R}}
\nc{\cS}{{\cal S}} \nc{\cT}{{\cal T}} \nc{\cU}{{\cal U}}
\nc{\cV}{{\cal V}} \nc{\cW}{{\cal W}} \nc{\cX}{{\cal X}}
\nc{\cZ}{{\cal Z}}
\nc{\hA}{{\hat{A}}} \nc{\hB}{{\hat{B}}} \nc{\hC}{{\hat{C}}}
\nc{\hD}{{\hat{D}}} \nc{\hE}{{\hat{E}}} \nc{\hF}{{\hat{F}}}
\nc{\hG}{{\hat{G}}} \nc{\hH}{{\hat{H}}} \nc{\hI}{{\hat{I}}}
\nc{\hJ}{{\hat{J}}} \nc{\hK}{{\hat{K}}} \nc{\hL}{{\hat{L}}}
\nc{\hM}{{\hat{M}}} \nc{\hN}{{\hat{N}}} \nc{\hO}{{\hat{O}}}
\nc{\hP}{{\hat{P}}} \nc{\hR}{{\hat{R}}} \nc{\hS}{{\hat{S}}}
\nc{\hT}{{\hat{T}}} \nc{\hU}{{\hat{U}}} \nc{\hV}{{\hat{V}}}
\nc{\hW}{{\hat{W}}} \nc{\hX}{{\hat{X}}} \nc{\hZ}{{\hat{Z}}}
\nc{\hn}{{\hat{n}}}
\def\lin{\mathop{\rm span}}
\def\min{\mathop{\rm min}}
\def\tr{\mathop{\rm Tr}}
\def\ox{\otimes}
\newcommand{\ket}[1]{|#1\rangle}
\newcommand{\proj}[1]{| #1\rangle\!\langle #1 |}
\newcommand{\ketbra}[2]{|#1\rangle\!\langle#2|}
\def\Dbar{\leavevmode\lower.6ex\hbox to 0pt
{\hskip-.23ex\accent"16\hss}D}
\begin{document}

\title{A complete picture of the four-party linear inequalities in terms of the $0$-entropy}

\date{\today}

\pacs{03.65.Ud, 03.67.Mn}

\author{Zhiwei Song}\email[]{zhiweisong@buaa.edu.cn}
\affiliation{LMIB(Beihang University), Ministry of Education, and School of Mathematical Sciences, Beihang University, Beijing 100191, China}

\author{Lin Chen}\email[]{linchen@buaa.edu.cn (corresponding author)}
\affiliation{LMIB(Beihang University), Ministry of Education, and School of Mathematical Sciences, Beihang University, Beijing 100191, China}
\affiliation{International Research Institute for Multidisciplinary Science, Beihang University, Beijing 100191, China}

\author{Yize Sun}\email[]{sunyize@buaa.edu.cn(corresponding author)}
\affiliation{LMIB(Beihang University), Ministry of Education, and School of Mathematical Sciences, Beihang University, Beijing 100191, China}

\author{Mengyao Hu}\email[]{mengyaohu@buaa.edu.cn(corresponding author)}
\affiliation{LMIB(Beihang University), Ministry of Education, and School of Mathematical Sciences, Beihang University, Beijing 100191, China}
\begin{abstract}
Multipartite quantum system is complex. Characterizing the relations among the three bipartite reduced density operators $\r_{AB}$, $\r_{AC}$ and $\r_{BC}$ of a tripartite state $\r_{ABC}$ has been an open problem in quantum information. One of such relations has been reduced by [Cadney et al, LAA. 452, 153, 2014] to a conjectured inequality in terms of matrix rank, namely $r(\r_{AB}) \cdot r(\r_{AC})\ge r(\r_{BC})$ for any $\r_{ABC}$. It is denoted as open problem $41$ in the website "Open quantum problems-IQOQI Vienna". We prove the inequality, and thus establish a complete picture of the four-party linear inequalities in terms of the $0$-entropy. Our proof is based on the construction of a novel canonical form of bipartite matrices under local equivalence. We apply our result to the marginal problem and the extension of inequalities in the multipartite systems, as well as the condition when the inequality is saturated.
\end{abstract}

\maketitle


\section{Introduction}

Multipartite systems play a key role in quantum-information processing. For example, the inequality for the von Neumann entropy of reduced density operators of a multipartite has been proposed in \cite{2005A,2012Infinitely}, and the multipartite state conversion under many-copy cases has been shown under stochastic local operations and classical communications \cite{ccd2010}. Further, the relation between the distillability of entanglement of three bipartite reduced density matrices from a tripartite pure state has been studied \cite{Chen2011Multicopy, Chen2012NONDISTILLABLE,HayashiWeaker}. However, it is not easy to extend the relation to the tripartite mixed state, even we merely consider the rank of reduced density operators. It has been conjectured in \cite{chl14} that the following inequality in terms of matrix rank may hold for any tripartite mixed state $\r_{ABC}$,  
\begin{eqnarray}
\label{eq:main-inequality}
r(\r_{AB}) \cdot r(\r_{AC})\ge r(\r_{BC}),
\end{eqnarray}
where $r(M)$ denotes the rank of matrix $M$, see Figure \ref{fig:ab.bc>=ac}. The conjectured inequality in \eqref{eq:main-inequality} has been listed as the open problem in \footnote{https://oqp.iqoqi.univie.ac.at/all-rank-inequalities-for-reduced-states-of-quadripartite-quantum-states/}. It has been proven true when $r(\r_{AB})$ is at most two and three in \cite{chl14} and \cite{2020The}, respectively. In this paper we prove the inequality \eqref{eq:main-inequality} for any $\r_{ABC}$ in Theorem \ref{thm:main}. The inequality together with the inequalities constructed in \cite{chl14}, establish basic inequalities for the tradeoff among the ranks of three bipartite reduced density operators, see \eqref{eq:0entropy}-\eqref{eq:0entropy-2}. This is another point of view  in contrast to the monogamy trade-off by Bell inequalities \cite{Clauser1971Proposed,Kurzy2011Correlation, Pawlowski2009Monogamy,REPRESENTATIONS,Butterley2006Compatibility,Coffman1999Distributed,HiguchiOne}.
We thus manage to extend the results in \cite{Chen2011Multicopy, Chen2012NONDISTILLABLE, HayashiWeaker} from tripartite pure states to mixed states. Next, our results present a novel necessary condition for the marginal problem, i.e., three bipartite reduced density operators from the same tripartite state satisfy \eqref{eq:main-inequality}. Further we extend  \eqref{eq:0entropy}-\eqref{eq:0entropy-2} to multipartite systems in Lemma \ref{le:ab.bc.cd>=ad}. We also discuss the condition when the inequality \eqref{eq:main-inequality} is saturated in Lemma \ref{le:saturate}.

\begin{figure}
\centering
\includegraphics[width=0.40\textwidth]{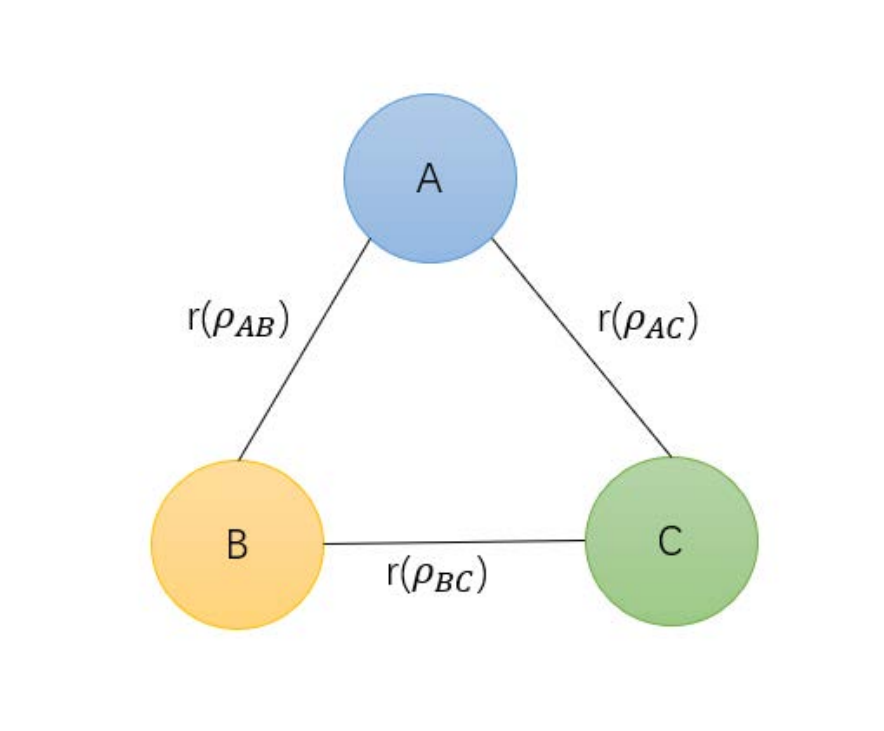}
\caption{The tripartite state $\r_{ABC}$ is conjectured to satisfy the inequality $r(\r_{AB})\cdot r(\r_{BC}) \ge r(\r_{AC})$ in terms of bipartite reduced density operators $\r_{AB}$, $\r_{BC}$, and $\r_{AC}$, and $r(M)$ denotes the rank of matrix $M$. The inequality is known to be equivalent to the $0$-entropy inequality $S_0(AB)+S_0(BC)\ge S_0(AC)$. We prove the inequality in this paper.}
\label{fig:ab.bc>=ac}
\end{figure}

We review the meaning of \eqref{eq:main-inequality} in terms of the $0$-entropy. Let $\alpha\in (0,1)\cup(1,\infty)$ and the logarithm have base two. The known $\alpha$-R\'{e}nyi entropy $S_\alpha$ of a quantum state $\rho$ is defined as
$
S_\alpha(\rho):=\frac{1}{1-\alpha}\log\tr\rho^\alpha.
$ One can verify that ${\lim_{\alpha\rightarrow1}}S_\alpha(\rho)$ is exactly the von Neumann entropy $S(\r):=-\tr (\r\log \r)$. In contrast, authors in Ref. \cite{chl14} has defined the $0$-entropy $S_0(A):=\lim_{\alpha\rightarrow0}S_\alpha(\rho_A)=\log r(\rho_A)$, the $0$-entropy vector as the $7$-dimensional vector $(r(\r_A),r(\r_B),r(\r_C),r(\r_D),
r(\r_{AB}),r(\r_{AC}),r(\r_{AD}))$ of a $4$-partite pure state of system $A,B,C,D$, as well as the subadditivity $ S_0(A)+S_0(B)\ge S_0(AB)$, $S_0(AB)+ S_0(AC)\ge S_0(A)$ as well as more inequalities, just like the counterpart inequalities of von Neumann entropy. Further, the $0$-entropy inequalities determine a cone with extremal rays characterized by eight 0-entropy vectors. In \cite{chl14}, focusing on the four-party case, authors have found six 0-entropy vectors corresponding to extremal rays, as well as the set of inequalities they correspond to.  The set turns out to be the known 0-entropy inequalities and that in \eqref{eq:main-inequality}. Hence, our proof on \eqref{eq:main-inequality} helps complete the forementioned characterization and construct a complete picture of the four-party linear inequalities for the $0$-entropy.

Technically speaking, our strategy of proving \eqref{eq:main-inequality} in Theorem \ref{thm:main} is to prove an equivalent form in terms of the partial transpose and Schmidt rank, see Conjecture \ref{cj:1}. The equivalence is proven in \cite{chl14}, and the proof of this conjecture is supported via a few basic facts from linear algebra in Lemmas \ref{lemma1} - \ref{lemma6}, as well as the construction of a novel canonical form of bipartite matrices under local equivalence in Theorem \ref{canonical}. 

The partial transpose is a positive map of extensive applications in quantum information.
Firstly it is known that a separable state is positive partial transpose (PPT), and it is the most efficient method of detecting entanglement so far \cite{hhh96}. Next, the two-qutrit PPT entangled states were constructed in 1997 \cite{horodecki1997}, and such states of rank four have been characterized \cite{Chen2012Equivalence,2011Three}. The PPT entanglement represents quantum resources which cannot be distillable into pure entangled states under local operations and classical communications (LOCC). What's more, 
bipartite non-PPT states of rank at most four turn out to be distillable \cite{Chen2008Rank,Chen2012Distillability,Lin2016Non}, and some non-PPT states are conjectured to be non distillable
\cite{hh1999, Divincenzo2000Evidence,5508622,QIAN2021139, 2020Five}. On the other hand, the Schmidt rank is a basic parameter of characterizing bipartite pure states, and has been extended to multipartite pure states as an entanglement monotone \cite{Eisert2000The,ccd2010}. Our result of proving \eqref{eq:main-inequality} shows novel understanding of the forementioned quantum-information applications in terms of partial transpose and Schmidt rank.

The rest of this paper is organized as follows. In Sec. \ref{sec:pre} we introduce the preliminary facts in Lemmas \ref{lemma1} - \ref{lemma6} for the proof of inequality \eqref{eq:main-inequality}. We show the proof in Theorem \ref{thm:main} supported by Theorem \ref{canonical} in Sec. \ref{sec:proof}. Then we apply our result in Lemmas \ref{le:ab.bc.cd>=ad} and \ref{le:saturate} in Sec. \ref{sec:app}. Finally we conclude in Sec. \ref{sec:con}.

\section{Preliminaries}
\label{sec:pre}

In this section we introduce the preliminary knowledge and facts of this paper. 
Let $\bbM_{m,n}$ be the set of $m\times n$ complex matrices, and $\bbM_n:=\bbM_{n,n}$. Let $I_{n}$ be the order-$n$ identity matrix.
We denote $M^T$ as the transpose of matrix $M$. We denote the bipartite operator $M\in \bbM_{m_1,n_1}\ox\bbM_{m_2,n_2}$ as $M=\sum^{m_1}_{i=1}\sum^{n_1}_{j=1}\ketbra{i}{j}\otimes M_{i,j}$ with $M_{i,j}\in \bbM_{m_2,n_2}.$ We denote the partial transpose of $M$ w.r.t. system $A$ and $B$ as $M^{\Gamma_A}=\sum^{m_1}_{i=1}\sum^{n_1}_{j=1}\ketbra{j}{i}\otimes M_{i,j}$, and $M^{\Gamma_B}=\sum^{m_1}_{i=1}\sum^{n_1}_{j=1}\ketbra{i}{j}\otimes M_{i,j}^T$, respectively. Next, the number of linearly independent blocks $M_{i,j}$'s is referred to the Schmidt rank $Sr:=Sr(M)$ of $M$. One can derive that
$Sr(M)
=
Sr(M^{\G_A})=Sr(M^{\G_B})
=Sr(M^T).	
$ 
It has been proven by \cite{chl14} that the conjectured inequality in \eqref{eq:main-inequality} is equivalent to the following conjecture.

\bcj
\label{cj:1}
Let $M\in \bbM_{m_1,n_1}\ox\bbM_{m_2,n_2}$. Then
\bea
\label{eq:cj2}
r(M^{\G_B})\le Sr(M)\cdot r(M).
\eea
The conjecture is equivalent to $r(M^{\G_A})\le Sr(M)\cdot r(M)$. If we write $M=\sum^K_{i=1} R_i \ox S_i^T$, where $R_1,\dots,R_K\in \bbM_{m_1,n_1}$ are linearly independent, and
$S_1,\dots,S_K\in \bbM_{m_2,n_2}$ are linearly independent, then the inequality is also equivalent to
$ r \bigg( \sum^K_{i=1} R_i \ox S_i \bigg)
\le
K\cdot r \bigg( \sum^K_{i=1} R_i \ox S_i^T \bigg)$.
\qed
\ecj

Next it follows from Theorem 5 of \cite{chl14} that Conjecture \ref{cj:1} holds for $Sr(M)\le2$. It has been proved in Theorem 2 of \cite{2020The} that Conjecture \ref{cj:1} holds for $Sr(M)=3$. Nevertheless, these proven cases do not contribute to our proof in the next section. To explain our proof, we present the following definition.

\begin{definition}
	\label{def:equivalent}
	We denote locally equivalent $M,N$ as $M\sim N$ if there exist invertible product matrices $U\otimes V$ and $W\otimes X$ such that $(U\otimes V)M(W\otimes X)=N$. 
	\qed
\end{definition}

So $M$ and $N$ have the same rank and Schmidt rank, one can also show that $M^\G$ and $N^\G$ have the same rank and Schmidt rank. Hence proving $M$ satisfies Conjecture \ref{cj:1} is equivalent to prove $N$ satisfies Conjecture \ref{cj:1}, and we shall frequently use this fact in the next section. In the rest of this section, we present five preliminary lemmas used for the proof of next section.  The following two lemmas can be straightforwardly proven using the basic matrix theory.  
\begin{lemma}
	\label{lemma1}	
	The following inequalities hold for any  block matrix
	\begin{eqnarray}
	\label{baseineq1}
	r(A_1)\le r(\bma A_1&A_2&\cdots&A_n
	\ema)\le r(A_1)+\cdots+r(A_n),
	\end{eqnarray}
	\begin{eqnarray}
	\label{baseineq2}
	r(A)+r(C)\le r(\bma A&0\\
	B&C\ema).
	\end{eqnarray}
	\qed
\end{lemma}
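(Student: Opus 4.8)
The plan is to prove the two displayed chains separately, since each reduces to an elementary statement about column spaces. Throughout I write $\cC(M)$ for the column space of $M$, so that $r(M)$ is the dimension of $\cC(M)$. For \eqref{baseineq1}, the lower bound is immediate: the columns of $A_1$ are literally among the columns of $\begin{bmatrix}A_1&A_2&\cdots&A_n\end{bmatrix}$, so $\cC(A_1)\subseteq\cC(\begin{bmatrix}A_1&A_2&\cdots&A_n\end{bmatrix})$ and taking dimensions gives $r(A_1)\le r(\begin{bmatrix}A_1&A_2&\cdots&A_n\end{bmatrix})$. For the upper bound I would note that every column of the big matrix is a column of some $A_i$, whence $\cC(\begin{bmatrix}A_1&A_2&\cdots&A_n\end{bmatrix})=\cC(A_1)+\cdots+\cC(A_n)$; since the dimension of a sum of subspaces never exceeds the sum of the dimensions, this yields $r(\begin{bmatrix}A_1&A_2&\cdots&A_n\end{bmatrix})\le r(A_1)+\cdots+r(A_n)$.

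For \eqref{baseineq2} I would set $a=r(A)$ and $c=r(C)$ and exhibit $a+c$ linearly independent columns inside $\begin{bmatrix}A&0\\B&C\end{bmatrix}$. Choose an index set $J$ with $|J|=a$ picking out $a$ linearly independent columns of $A$, and an index set $K$ with $|K|=c$ picking out $c$ linearly independent columns of $C$; the associated columns of the block matrix are $\begin{bmatrix}A_{:,j}\\B_{:,j}\end{bmatrix}$ for $j\in J$ together with $\begin{bmatrix}0\\C_{:,k}\end{bmatrix}$ for $k\in K$. To check independence I would equate a general linear combination of these columns to zero and read off the top block first: there the second group contributes nothing (its upper part is $0$), so the chosen columns of $A$ must combine to zero, forcing all $J$-coefficients to vanish by the choice of $J$. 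Substituting back, the bottom block then reduces to a vanishing combination of the chosen columns of $C$, forcing all $K$-coefficients to vanish by the choice of $K$. Hence the $a+c$ columns are independent and $r(\begin{bmatrix}A&0\\B&C\end{bmatrix})\ge a+c=r(A)+r(C)$.

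Both statements are genuinely elementary, so I do not anticipate any real obstacle. The only step demanding a little care is the decoupling argument for \eqref{baseineq2}: one must use the vanishing off-diagonal block to separate the two coefficient vectors, solving the top block for the $J$-coefficients before the bottom block determines the $K$-coefficients. No tool beyond basic facts about column spaces and linear independence should be required.
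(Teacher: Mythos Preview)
Your proposal is correct. The paper does not actually prove this lemma at all; it simply states that it ``can be straightforwardly proven using the basic matrix theory'' and marks it with \qed, so your column-space argument for \eqref{baseineq1} and your decoupling argument for \eqref{baseineq2} are exactly the kind of elementary verification the authors are implicitly invoking.
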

\begin{lemma}
	\label{lemmaadd}
	(i) Suppose $A_1,\cdots,A_n\in \bbM_{m_2,n_2}$ are linearly independent, $R\in \bbM_{n_2}$ is invertible. Then $A_1\cdot R,\cdots,A_n\cdot R$ are linearly independent.
	
	(ii) Suppose $A_1,\cdots,A_n,A_{n+1}\in \bbM_{m_2,n_2}$, and $R\in \bbM_{n_2}$ is invertible. If $A_{n+1}\in \lin \{A_1,\cdots,A_n\}$, then $A_{n+1}\cdot R\in \lin \{A_1\cdot R,\cdots,A_n\cdot R\}$.
	\qed
\end{lemma}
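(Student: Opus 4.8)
The plan is to observe that both parts are instances of a single structural fact: since $R\in\bbM_{n_2}$ is invertible, right multiplication by $R$ is a linear bijection of $\bbM_{m_2,n_2}$ onto itself. Concretely, define $\phi_R:\bbM_{m_2,n_2}\to\bbM_{m_2,n_2}$ by $\phi_R(X)=X\cdot R$. This map is $\bbC$-linear, and it is invertible with inverse $\phi_{R^{-1}}(X)=X\cdot R^{-1}$, because $R\cdot R^{-1}=R^{-1}\cdot R=I_{n_2}$. A linear bijection carries linearly independent sets to linearly independent sets and preserves spans, so both claims follow at once. Since the statement is elementary, I would nonetheless spell out the two short arguments rather than merely invoke this principle.

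For part (i), first I would suppose a vanishing linear combination $\sum_{i=1}^n c_i\,(A_i\cdot R)=0$ with scalars $c_i$. Factoring $R$ out on the right gives $\big(\sum_{i=1}^n c_i A_i\big)R=0$, and multiplying on the right by $R^{-1}$ yields $\sum_{i=1}^n c_i A_i=0$. The linear independence of $A_1,\dots,A_n$ then forces every $c_i=0$, which is exactly the linear independence of $A_1\cdot R,\dots,A_n\cdot R$. For part (ii), I would start from the hypothesis $A_{n+1}\in\lin\{A_1,\dots,A_n\}$, write $A_{n+1}=\sum_{i=1}^n c_i A_i$, and right-multiply by $R$ to obtain $A_{n+1}\cdot R=\sum_{i=1}^n c_i\,(A_i\cdot R)\in\lin\{A_1\cdot R,\dots,A_n\cdot R\}$, as required.

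There is no genuine obstacle here; the only point demanding a moment of care is that the relevant operation is multiplication on the \emph{right} (matching how $R$ will later act on the Schmidt blocks), and that it is precisely the \emph{invertibility} of $R$—used via $R^{-1}$ in part (i)—that makes the cancellation legitimate. Both facts would be used repeatedly in the next section whenever a local equivalence $M\sim N$ is applied, since such transformations act on the Schmidt blocks by exactly this kind of invertible one-sided multiplication.
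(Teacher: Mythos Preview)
Your proof is correct and is precisely the standard linear-algebra argument the paper has in mind: the paper does not spell out a proof of this lemma at all, merely remarking that it ``can be straightforwardly proven using the basic matrix theory'' and marking it with a \qed. Your write-up fills in exactly those details, including the useful observation that the map $X\mapsto XR$ is a linear bijection of $\bbM_{m_2,n_2}$, which is the structural reason both parts hold.
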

The following lemma
is assertion (a) of Theorem 6 in \cite{2020The}. 
\begin{lemma}
	\label{lemma10}
	Suppose one block-row or block-column of the Schmidt-rank-K block matrix $M$ has $K$ linearly independent blocks. Then $M$ satisfies Conjecture \ref{cj:1}. 
	\qed
\end{lemma}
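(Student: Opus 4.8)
The plan is to prove the equivalent Conjecture \ref{cj:1}, that is $r(M^{\Gamma_B})\le K\cdot r(M)$ where $K:=Sr(M)$. First I would reduce to the block-row case. Since Conjecture \ref{cj:1} is equivalent to its $\Gamma_A$-form $r(M^{\Gamma_A})\le Sr(M)\cdot r(M)$, and passing to $M^T$ sends $(M^T)^{\Gamma_B}$ to $M^{\Gamma_A}$ while turning a full block-column into a full block-row, I may assume without loss of generality that it is a block-row, say block-row $1$, that carries $K$ linearly independent blocks. Taking these blocks $B_1=M_{1,j_1},\dots,B_K=M_{1,j_K}$ as a Schmidt basis, I write $M=\sum_{l=1}^{K}R_l\ox B_l$ with $\{R_l\}\subset\bbM_{m_1,n_1}$ and $\{B_l\}\subset\bbM_{m_2,n_2}$ both linearly independent (if the $R_l$ were dependent, $M$ would have Schmidt rank below $K$), so $M^{\Gamma_B}=\sum_{l=1}^{K}R_l\ox B_l^{T}$. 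By local equivalence (Definition \ref{def:equivalent}) I may permute the independent blocks to the first $K$ block-columns.

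The clean part is the single-block-row estimate. Writing the compressed block-row $N_1:=\begin{bmatrix}B_1&\cdots&B_K\end{bmatrix}$, its partial transpose is $N_1^{\Gamma_B}=\begin{bmatrix}B_1^{T}&\cdots&B_K^{T}\end{bmatrix}$, and \eqref{baseineq1} gives $r(N_1^{\Gamma_B})\le\sum_{l=1}^{K}r(B_l^{T})=\sum_{l=1}^{K}r(B_l)\le K\max_l r(B_l)\le K\cdot r(N_1)\le K\cdot r(M)$, since each $B_l$ is a submatrix of $N_1$ and $N_1$ is a block-row of $M$. This already settles Conjecture \ref{cj:1} when $M$ is a single block-row. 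To reach the general case I would peel this block-row off: using block-column operations (a local equivalence) I clear block-row $1$ beyond its first $K$ columns, bringing $M\sim\begin{bmatrix}N_1&0\\ E&F\end{bmatrix}$. Then \eqref{baseineq2} yields $r(N_1)+r(F)\le r(M)$, while splitting the columns of $M^{\Gamma_B}$ by \eqref{baseineq1} gives $r(M^{\Gamma_B})\le r\!\left(\begin{bmatrix}N_1\\ E\end{bmatrix}^{\Gamma_B}\right)+r(F^{\Gamma_B})$.

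I would then finish by induction on the number of block-columns: $F$ has strictly fewer block-columns and Schmidt rank at most $K$, so the inductive hypothesis bounds $r(F^{\Gamma_B})\le K\cdot r(F)\le K\bigl(r(M)-r(N_1)\bigr)$, while the left piece $\begin{bmatrix}N_1\\ E\end{bmatrix}$ has exactly $K$ block-columns and still carries the full block-row $N_1$. The main obstacle is precisely the bookkeeping here: the residual block-rows $E$ can raise $r\!\left(\begin{bmatrix}N_1\\ E\end{bmatrix}\right)$ above $r(N_1)$, so estimating $r\!\left(\begin{bmatrix}N_1\\ E\end{bmatrix}^{\Gamma_B}\right)$ crudely by $K\cdot r\!\left(\begin{bmatrix}N_1\\ E\end{bmatrix}\right)$ would introduce a second factor of $K$ and only yield the useless bound $2K\cdot r(M)$. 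The crux is therefore to tie the two contributions together against the single rank budget $r(M)$, for instance by choosing the complementary block-columns so that $r\!\left(\begin{bmatrix}N_1\\ E\end{bmatrix}\right)$ and $r(F)$ add up to at most $r(M)$, or by running a joint induction on the pair $(K,n_1)$ that spends the factor $K$ only once. That the hypothesis is genuinely indispensable for this balancing is clear from the fact that, absent a full block-row, individual Schmidt terms $R_l\ox B_l^{T}$ can have rank exceeding $r(M)$, so the term-by-term bound $\sum_l r(R_l)\,r(B_l)\le K\cdot r(M)$ is false in general.
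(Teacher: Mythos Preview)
The paper does not supply its own proof of this lemma; it is quoted as assertion~(a) of Theorem~6 in the cited reference, so there is no in-paper argument to compare against. Your proposal therefore has to stand on its own, and as you yourself flag, it does not close.

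The gap is exactly where you place it. After the reduction $M\sim\begin{bmatrix}N_1&0\\E&F\end{bmatrix}$ you have $r(N_1)+r(F)\le r(M)$ from \eqref{baseineq2} and $r(M^{\Gamma_B})\le r\bigl(\begin{bmatrix}N_1\\E\end{bmatrix}^{\Gamma_B}\bigr)+r(F^{\Gamma_B})$ from a column split. But neither of your proposed inductions runs. Inducting on $n_1$ would require the hypothesis of the lemma to apply to $F$, yet $F$ has no reason to contain a block-row or block-column with $Sr(F)$ linearly independent blocks; all you know is that its blocks lie in $\lin\{B_1,\dots,B_K\}$, which gives $Sr(F)\le K$ but says nothing about how the independent blocks are distributed inside $F$. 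Inducting on $K$ fails for the same reason: $Sr(F)$ can equal $K$. And even granting $r\bigl(\begin{bmatrix}N_1\\E\end{bmatrix}^{\Gamma_B}\bigr)\le K\,r\bigl(\begin{bmatrix}N_1\\E\end{bmatrix}\bigr)$ by a recursive call, the rank $r\bigl(\begin{bmatrix}N_1\\E\end{bmatrix}\bigr)$ can exceed $r(N_1)$ by an amount you do not control, so adding $K\,r(F)$ overshoots $K\,r(M)$ exactly as you feared.

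What your column-wise split lacks is a mechanism that compresses \emph{all} blocks simultaneously. The hypothesis says every block of $M$ lies in $\lin\{B_1,\dots,B_K\}$; in the block-column formulation this lets a single local operation on the $B$-side (as in Lemma~\ref{lemma6} and the paper's Step~8) force every block---not just those in the distinguished column---to have only $r_1:=r\bigl([B_1;\dots;B_K]\bigr)\le r(M)$ nonzero columns. After $\Gamma_B$ each block then has only $r_1$ nonzero rows, so the first $K$ block-rows contribute at most $K\,r_1$ to $r(M^{\Gamma_B})$, while the residual part now has its first block-column identically zero, giving a genuine size reduction with $r_1+r(\text{residual})\le r(M)$. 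Your decomposition by peeling off block-columns does not produce this simultaneous compression, which is why the two contributions cannot be balanced against the single budget $K\,r(M)$.
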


We next present another lemma.
\begin{lemma}
\label{lemma11}
(i)	Suppose 
	\begin{eqnarray}
	\notag
	M=\bma M_{1,1}&M_{1,2}&M_{1,3}&\cdots &M_{1,n_{1}}\\
	\vdots &\vdots &\vdots &\ddots &\vdots\\M_{t,1}&M_{t,2}&M_{t,3}&\cdots&M_{t,n_{1}}\\0&M_{t+1,2}&M_{t+1,3}&\cdots&M_{t+1,n_1}\\\vdots&\vdots&\vdots&\ddots&\vdots\\
	0&M_{m_{1},2}&M_{m_{1},3}&\cdots &M_{m_{1},n_{1}} \ema,
	\end{eqnarray}
	where $M_{1,1},\cdots,M_{t,1}$ are linearly independent and there exists a block $M_{i,j}$ that is linearly independent with the $t$ blocks for $t+1\le i\le m_1,2\le j\le n_1$. Then by multiplying an appropriate constant to the $j$-th block-column of $M$ and adding to the first block-column, the new first block-column has at least $t+1$ linearly independent blocks.

(ii) Suppose 
\begin{eqnarray}
\notag
M=\bma M_{1,1}&M_{1,2}&M_{1,3}&\cdots &M_{1,n_{1}}\\
\vdots &\vdots &\vdots &\ddots &\vdots\\M_{t,1}&M_{t,2}&M_{t,3}&\cdots&M_{t,n_{1}}\\M_{t+1,1}&M_{t+1,2}&M_{t+1,3}&\cdots&M_{t+1,n_1}\\\vdots&\vdots&\vdots&\ddots&\vdots\\
M_{s,1}&M_{s,2} & M_{s,3}& \cdots& M_{s,n_1}
\ema,
\end{eqnarray}
where $s>t$, and the $s+t$ matrices $M_{1,1},\cdots,M_{s,1},M_{1,2},\cdots, M_{t,2}$ are linearly independent. Any $M_{i,2}$ is spanned by $M_{1,1},\cdots,M_{s,1}$, where $t+1\le i\le s$. At the same time, there exists a block $M_{i,j}$ that is linearly independent with the $s+t$ matrices for $t+1\le i\le s,3\le j\le n_1$. Then by multiplying an appropriate constant to the $j$-th block column of $M$ and adding to the second block-column, the new second block-column has at least $t+1$ linearly independent blocks and they are linearly independent with $M_{1,1},\cdots,M_{s,1}$.

(iii) Suppose
\begin{eqnarray}
\notag
M=\bma M_{1,1}&M_{1,2}&\cdots&M_{1,n}&\cdots &M_{1,n_{1}}\\
\vdots &\vdots &\ddots &\vdots&\ddots &\vdots \\M_{t_n,1}&M_{t_n,2}&\cdots&M_{t_n,n}&\cdots&M_{t_n,n_{1}}\\M_{t_n+1,1}&M_{t_n+1,2}&\cdots&M_{t_n+1,n}&\cdots&M_{t_n+1,n_1}\\\vdots&\vdots&\ddots&\vdots&\ddots&\vdots\\
M_{t_2,1}&M_{t_2,2} &\cdots& M_{t_2,n}& \cdots& M_{t_2,n_1}\\
\vdots&\vdots&\ddots&\vdots&\ddots&\vdots\\
M_{t_1,1}&M_{t_1,2} &\cdots& M_{t_1,n}& \cdots& M_{t_1,n_1}
\ema,
\end{eqnarray}
where $t_1\ge t_2\ge \cdots\ge t_{n-1}>t_n$, and the $t_1+t_2+\cdots+t_n$ matrices $M_{1,1},\cdots,M_{t_1,1}$, $M_{1,2},\cdots, M_{t_2,2}, \cdots$, $M_{1,n},\cdots,M_{t_n,n}$ are linearly independent. Any $M_{i,n}$ is spanned by $M_{1,1},\cdots,M_{t_1,1}$, $M_{1,2},\cdots, M_{t_2,2}, \cdots$, $M_{1,n-1},\cdots,M_{t_{n-1},n-1}$, where $t_n+1\le i\le t_1$. At the same time, there exists a block $M_{i,j}$ that is linearly independent with the $t_1+t_2+\cdots+t_n$ matrices for $t_n+1\le i\le t_1,n+1\le j\le n_1$. Then by multiplying an appropriate constant to the $j$-th block column of $M$ and adding to the $n$-th block-column of $M$, the new $n$-th block-column has at least $t_n+1$ linearly independent blocks and they are linearly independent with $M_{1,1},\cdots,M_{t_1,1},M_{1,2},\cdots,M_{t_2,2}$, $\cdots,M_{1,n-1},\cdots,M_{t_{n-1},n-1}$.
\end{lemma}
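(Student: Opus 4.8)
The plan is to isolate a single genericity fact about linear combinations of blocks and then reduce all three parts to it by passing to suitable quotient spaces. \textbf{Core fact:} let $w_1,\dots,w_t$ be linearly independent vectors in a vector space $W$, let $v_1,\dots,v_t\in W$ be arbitrary, and let $u\in W$ be linearly independent from $w_1,\dots,w_t$; then for all but finitely many scalars $c$, the $t+1$ vectors $w_1+cv_1,\dots,w_t+cv_t,cu$ are linearly independent. To prove this I would form the wedge product $\omega(c)=(w_1+cv_1)\we\cdots\we(w_t+cv_t)\we(cu)$ and factor out the last $c$, writing $\omega(c)=c\,\eta(c)$ with $\eta(c)=(w_1+cv_1)\we\cdots\we(w_t+cv_t)\we u$. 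Since $\eta(0)=w_1\we\cdots\we w_t\we u\ne0$ by independence of $w_1,\dots,w_t,u$, the polynomial $\eta(c)$ is not identically zero, so $\eta(c)\ne0$, and hence $\omega(c)\ne0$, for all but finitely many $c\ne0$; linear independence of the $t+1$ vectors is exactly the nonvanishing of $\omega(c)$.

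For (i) I would apply the core fact with $w_k=M_{k,1}$, $v_k=M_{k,j}$ for $k=1,\dots,t$, and $u=M_{i,j}$, viewing the blocks as vectors in $\bbM_{m_2,n_2}$. The hypotheses match: $M_{1,1},\dots,M_{t,1}$ are independent and $M_{i,j}$ is independent from them. After adding $c$ times the $j$-th block-column to the first, the blocks in rows $1,\dots,t$ become $M_{k,1}+cM_{k,j}$ and the block in row $i$ becomes $cM_{i,j}$ (the original entry there is $0$ since $i\ge t+1$). For a generic $c$ these $t+1$ blocks are independent, proving the claim.

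Parts (ii) and (iii) rest on the observation that the spanning hypothesis is precisely what is needed to reduce to the core fact after quotienting. For (ii), I would set $W=\lin\{M_{1,1},\dots,M_{s,1}\}$ and work in $\bbM_{m_2,n_2}/W$, writing $\overline{X}$ for the image of $X$. The hypothesis that each $M_{i,2}$ (for $t+1\le i\le s$) lies in $W$ gives $\overline{M_{i,2}}=0$, so the row-$i$ block of the modified second column, $M_{i,2}+cM_{i,j}$, projects to $c\,\overline{M_{i,j}}$. The remaining hypotheses transfer cleanly: $\overline{M_{1,2}},\dots,\overline{M_{t,2}}$ are independent because $M_{1,1},\dots,M_{s,1},M_{1,2},\dots,M_{t,2}$ are, and $\overline{M_{i,j}}$ is independent from them because $M_{i,j}$ is independent from the full $(s+t)$-element set. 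The core fact applied in the quotient with $w_k=\overline{M_{k,2}}$, $v_k=\overline{M_{k,j}}$, $u=\overline{M_{i,j}}$ then yields $t+1$ blocks of the new second column whose images are independent; lifting back, these together with $M_{1,1},\dots,M_{s,1}$ are independent, which is the assertion. Part (iii) is identical in spirit, now taking $W=\lin\{M_{1,1},\dots,M_{t_1,1},\dots,M_{1,n-1},\dots,M_{t_{n-1},n-1}\}$; the staircase condition $t_1\ge\cdots\ge t_{n-1}>t_n$ together with the spanning of each $M_{i,n}$ ($t_n+1\le i\le t_1$) by these blocks again forces $\overline{M_{i,n}}=0$, reducing to the core fact.

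The hard part will not be the mathematics of any single part, since all three run on the same elementary genericity argument, but the bookkeeping that verifies the hypotheses survive passage to the quotient. Specifically, one must check that the ``spanned by the earlier columns'' condition exactly annihilates the diagonal-row block $M_{i,n}$ (respectively $M_{i,2}$) in the quotient while the independence of the witness block $M_{i,j}$ and of the relevant column blocks is preserved, and one must keep the index ranges ($t+1\le i$, distinctness of the chosen rows, and the roles of $t_n$ versus $t_1$) exactly straight; that is where an error is most likely to hide.
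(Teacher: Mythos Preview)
Your proof is correct and takes a genuinely different route from the paper's. The paper argues each part by explicit case analysis: for (i) it assumes without loss of generality that $j=2$, splits into two cases according to whether the blocks $M_{1,2},\dots,M_{t,2}$ already lie in $\lin\{M_{1,1},\dots,M_{t,1},M_{t+1,2}\}$, performs preliminary block-row operations to strip off the $M_{t+1,2}$-components, and then writes down an explicit $(t+1)\times(t+1)$ coefficient matrix of the shape $kA+I$ (block-diagonal with a final scalar $k$) whose determinant is a nonzero polynomial in $k$, giving an admissible $k'$. Parts (ii) and (iii) repeat this pattern with larger block-triangular coefficient matrices. Your approach compresses all of this into a single polynomial non-vanishing argument in $\Lambda^{t+1}$, and replaces the paper's case splits and row reductions by the quotient-space trick, which cleanly converts the ``spanned by earlier columns'' hypothesis into the vanishing $\overline{M_{i,n}}=0$ needed to match the core fact. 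What you gain is uniformity and brevity, with no separate cases (A)/(B) or (C)/(D) and no preliminary row operations; what the paper's approach gains is constructiveness---one can read off an admissible constant as any value avoiding the roots of a concrete determinant, whereas your argument only asserts existence for all but finitely many $c$.
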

\begin{proof}
(i) Without loss of generality, we assume that $M_{t+1,2}$ is linearly independent with $M_{1,1},\cdots,M_{t,1}$. We have two cases, namely (\hyperlink{A}{A}) and (\hyperlink{B}{B}), as follows.

\hypertarget{A}{(A)}. Suppose any $M_{i,2}$ is spanned by $M_{1,1},\cdots,M_{t,1},M_{t+1,2}$, where $1\le i\le t$. We first apply block-row operations on $M$ and still use $M_{i,j}$ to denote the new blocks, such that $M_{1,1},\cdots,M_{t,1}$ are  linearly independent, and any $M_{i,2}$, where $1\le i\le t$, is the combinations of the $t$ matrices. Thus we set
\begin{eqnarray}
\label{mi2}
M_{i,2}=a_{1i}M_{1,1}+\cdots+a_{ti}M_{t,1},
\end{eqnarray}
where $a_{1i},\cdots,a_{ti}$ are coefficients, $1\le i\le t$.
Next, we multiply a nonzero constant $k$ to the second block-column of $M$ and add to the first block-column.  Denote $M_{1,1}',\cdots,M_{t,1}',M_{t+1,1}'$ as the new blocks in the first block-column, we have
\begin{eqnarray}
\label{m'}
M_{i,1}'=M_{i,1}+kM_{i,2}
\end{eqnarray}
holds for $1\le i\le t+1$. 
Assume that
\begin{eqnarray}
\label{coe1}
&&\eta_1M_{1,1}'+\eta_2M_{2,1}'\cdots+\eta_tM_{t,1}'+\eta_{t+1}M_{t+1,1}'=0,
 \end{eqnarray}
where $\eta_1,\eta_2,\cdots,\eta_t,\eta_{t+1}$ are complex numbers. Using (\ref{mi2}) and (\ref{m'}), we have
\begin{eqnarray}
\label{co}
\notag
&&(\eta_1(ka_{11}+1)+\eta_2ka_{12}+\cdots+\eta_tka_{1t})M_{1,1}\\
\notag
+&&(\eta_1ka_{21}+\eta_2(ka_{22}+1)+\cdots+\eta_tka_{2t})M_{2,1}\\
\notag
+&& \cdots\\
\notag
+&&(\eta_1ka_{t1}+\eta_2ka_{22}+\cdots+\eta_t(ka_{tt}+1))M_{t,1}\\
+&&\eta_{t+1}kM_{t+1,2}=0.
\end{eqnarray}
Since $M_{1,1},\cdots,M_{t,1},M_{t+1,2}$ are linearly independent, from (\ref{co}) we have 
\begin{eqnarray}
\label{coe}
\bma \begin{array}{ccccc} ka_{11}+1&ka_{12}&\cdots&ka_{1t}&0\\
ka_{21}&ka_{22}+1&\cdots&ka_{2t}&0\\
\vdots&\vdots&\ddots&\vdots&\vdots\\
ka_{t1}&ka_{t2}&\cdots&ka_{tt}+1&0\\
0&0&\cdots&0&k
\end{array}
\ema\cdot
\bma \eta_1\\\eta_2\\ \vdots\\\eta_t\\\eta_{t+1} \ema
=\bma 0\\0\\\vdots\\0\\0 \ema.
\end{eqnarray}
For the linear equations (\ref{coe}), we can find an appropriate $k'$ such that 
\begin{eqnarray}
\det 
\left| \begin{array}{ccccc} k'a_{11}+1&k'a_{12}&\cdots&k'a_{1t}&0\\
k'a_{21}&k'a_{22}+1&\cdots&k'a_{2t}&0\\
\vdots&\vdots&\ddots&\vdots&\vdots\\
k'a_{t1}&k'a_{t2}&\cdots&k'a_{tt}+1&0\\
0&0&\cdots&0&k'
\end{array}
\right| \neq 0,
\end{eqnarray}
and hence $\eta_1=\eta_2=\cdots=\eta_t=\eta_{t+1}=0$.  Let $k=k'$, then $M_{1,1}',\cdots,M_{t,1}',M_{t+1,1}'$ are linearly independent in terms of (\ref{coe1}). We have finished the proof.

\hypertarget{B}{(B)}. Suppose there exists at least one $M_{i,2}$ that is linearly independent with $M_{1,1},\cdots,M_{t,1},M_{t+1,2}$, where $1\le i\le t$. Without loss of generality, assume that $M_{1,1},\cdots,M_{t,1},M_{t+1,2},M_{1,2},\cdots,M_{v,2}$ are linearly independent, where $1\le v\le t$. At the same time, any $M_{i,2}$ is spanned by the $t+1+v$ matrices for $v+1\le i\le t$. 
Next, apply block-row operations on $M$ such that
$M_{i,2}$, where $v+1\le i\le t$, is the combination of $M_{1,1},\cdots,M_{t,1}$, and the coefficients are still denoted by those in (\ref{mi2}). 
By multiplying $k'$ in case (\hyperlink{A}{A}) to the second block-column of $M$ and adding to the first block-column, we obtain that $M_{v+1,1}',\cdots,M_{t+1,1}'$ are linearly independent. Since $k'$ is nonzero, we also have $M_{1,1}',\cdots,M_{v,1}'$ are linearly independent, and they are linearly independent with $M_{v+1,1}',\cdots,M_{t,1}'$. We have finished the proof.\\

	(ii) Without loss of generality, assume that $M_{t+1,3}$ is linearly independent with the $s+t$ matrices. We have two cases, namely (\hyperlink{C}{C}) and (\hyperlink{D}{D}), as follows.
	
\hypertarget{C}{(C)}. Suppose any $M_{i,3}$ is spanned by $M_{1,1},\cdots,M_{s,1},M_{1,2},\cdots, M_{t,2},M_{t+1,3}$,  where $1\le i\le t$. We first apply block-row operations on $M$, such that any $M_{i,3}$ is spanned by $M_{1,1},\cdots,M_{s,1},M_{1,2},\cdots, M_{t,2}$, where $1\le i\le t$. At the same time, $M_{t+1,2}$ is still spanned by $M_{1,1},\cdots,M_{s,1}$. Thus we set
 	\begin{eqnarray}
 	\label{mi3}
	M_{i,3}=b_{1i}M_{1,1}+\cdots+b_{si}M_{s,1}+c_{1i}M_{1,2}+\cdots+c_{ti}M_{t,2},
	\end{eqnarray}
	where $b_{1i},\cdots,b_{si},c_{1i},\cdots,c_{ti}$ are coefficients, $1\le i\le t$.
	Set
	\begin{eqnarray} 
	\label{spec}
	M_{t+1,2}=d_{1}M_{1,1}+\cdots+d_{s}M_{s,1},
	\end{eqnarray}
	where $d_{1},\cdots,d_{s}$ are coefficients. Next, we multiply a nonzero constant $k$ to the third block-column of $M$ and add to the second block-column.  Denote $M_{1,2}'',\cdots,M_{t,2}'',M_{t+1,2}''$ as the new blocks in the second block-column, we have
	\begin{eqnarray}
	\label{m''}
	M_{i,2}''=M_{i,2}+kM_{i,3}
	\end{eqnarray}
holds for $1\le i\le t+1$. Assume that 
	\begin{eqnarray}
	\label{coe2}
\zeta_1M_{1,1}+\cdots+\zeta_sM_{s,1}+\gamma_{1}M_{1,2}''+\cdots+\gamma_tM_{t,2}''+\gamma_{t+1}M_{t+1,2}''=0,
	\end{eqnarray}
where $\zeta_1,\cdots,\zeta_s,\gamma_{1},\cdots,\gamma_t,\gamma_{t+1}$ are complex numbers. Using (\ref{mi3}), (\ref{spec}) and (\ref{m''}), we have
	\begin{eqnarray}
	\label{coe3}
	\notag
	&&(\zeta_1+\gamma_1kb_{11}+\cdots+\gamma_tkb_{1t}+\gamma_{t+1}d_1)M_{1,1}\\
	\notag
	+&&\cdots \\
	\notag
	+&&(\zeta_s+\gamma_1kb_{s1}+\cdots+\gamma_tkb_{st}+\gamma_{t+1}d_s)M_{s,1}\\
	\notag 
	+&&(\gamma_1(kc_{11}+1)+\cdots+\gamma_tkc_{1t})M_{1,2}\\
	\notag
	+&& \cdots\\
	\notag
	+&&(\gamma_1kc_{t1}+\cdots+\gamma_t(kc_{tt}+1))M_{t,2}\\
	+&&k\eta_{t+1}M_{t+1,3}=0.
	\end{eqnarray}
Since $M_{1,1},\cdots,M_{s,1}$, $M_{1,2},\cdots, M_{t,2},M_{t+1,3}$ are linearly independent, from (\ref{coe3}) we have
	\begin{eqnarray}
	\label{eq:88}
\bma
\begin{array}{ccccccccc}
	\multicolumn{3}{c}{\multirow{3}{*}{\Large ${I_s}$}}&kb_{11}&\cdots&kb_{1t}&d_1\\
	 & & &\vdots&\ddots&\vdots&\vdots\\
	 & & &kb_{s1}&\cdots&kb_{st}&d_s\\
	0&\cdots&0&kc_{11}+1&\cdots&kc_{1t}&0\\
	\vdots&\ddots&\vdots&\vdots&\ddots&\vdots&\vdots\\
	0&\cdots&0&kc_{t1}&\cdots&kc_{tt}+1&0\\
	0&\cdots&0&0&\cdots&0&k
	\end{array}
\ema\cdot
\bma \zeta_1\\\vdots\\ \zeta_s\\ \gamma_1\\ \vdots\\ \gamma_{t}\\\gamma_{t+1} \ema=\bma 0\\\vdots\\0\\0\\ \vdots\\0\\0 \ema	.
	\end{eqnarray}
	For the linear equations (\ref{eq:88}), we can find an appropriate $k''$ such that 
	\begin{eqnarray}
\det 
	\left| \begin{array}{ccccc} k''c_{11}+1&k''c_{12}&\cdots&k''c_{1t}&0\\
	k''c_{21}&k''c_{22}+1&\cdots&k''c_{2t}&0\\
	\vdots&\vdots&\ddots&\vdots&\vdots\\
	k''c_{t1}&k''c_{t2}&\cdots&k''c_{tt}+1&0\\
	0&0&\cdots&0&k''
	\end{array}
	\right| \neq 0,
	\end{eqnarray}
and hence $\zeta_1=\cdots=\zeta_s=\gamma_1=\cdots=\gamma_t=\gamma_{t+1}=0$.
	Let $k=k''$, then
	$M_{1,1},\cdots,M_{s,1},M_{1,2}'',\cdots, M_{t,2}'',M_{t+1,2}''$ are linearly independent in terms of (\ref{coe2}).
	We have finished the proof.

\hypertarget{D}{(D)}. Suppose there exists at least one $M_{i,3}$ that is linearly independent with $M_{1,1},\cdots,M_{s,1},M_{1,2},\cdots,M_{t,1},M_{t+1,3}$, where $1\le i\le t$. Without loss of generality, assume that $M_{1,1},\cdots,M_{s,1},M_{1,2},\cdots,M_{t,2},M_{t+1,3},M_{1,3},\cdots,M_{v,3}$ are linearly independent, where $1\le v\le t$. At the same time, any $M_{i,3}$ is spanned by the $s+t+1+v$ matrices for $v+1\le i\le t$. 
Next, apply block-row operations on $M$ such that
$M_{i,3}$, where $v+1\le i\le t$, is the combination of $M_{1,1},\cdots,M_{s,1},M_{1,2},\cdots,M_{t,2}$, and the coefficients are still denoted by those in (\ref{mi3}). 
By multiplying $k''$ in case (\hyperlink{C}{C}) to the third block-column of $M$ and adding to the second block-column, we obtain that $M_{v+1,2}'',\cdots,M_{t+1,2}''$ are linearly independent and they are linearly independent with $M_{1,1},\cdots,M_{s,1}$. Since $k''$ is nonzero, we also have $M_{1,2}'',\cdots,M_{v,2}''$ are linearly independent and they are linearly independent with $M_{v+1,2}'',\cdots,M_{t+1,2}''$ and the $s$ blocks in the first block-column. We have finished the proof.\\

(iii) The proof is similar to that of (ii).
\end{proof}

Finally we present another  observation.
\begin{lemma}
	\label{lemma6}
	Suppose the $m\times 1$ block matrix $\bma P_1\\ P_2\\ \vdots\\ P_m \ema$ has $n$ columns and $k$ linearly independent columns, $k\le n$. Then for any matrix $Q\in \lin \{ P_1,\cdots, P_m\}$, Q has $k$ linearly independent columns at most. Through elementary column operations on $Q$, $Q$ has $k$ nonzero columns at most.
\end{lemma}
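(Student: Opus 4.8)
The plan is to exploit the fact that every right null-space relation among the columns of the stacked matrix is inherited by any matrix lying in $\lin\{P_1,\dots,P_m\}$. Write $P:=\bma P_1\\ \vdots\\ P_m\ema$ for the given $m\times 1$ block matrix, so that $P$ has $n$ columns and, by hypothesis, column rank $k$. Fix scalars $\lambda_1,\dots,\lambda_m$ with $Q=\sum_{i=1}^m\lambda_i P_i$; note $Q$ also has $n$ columns. Since the blocks $P_i$ occupy disjoint block-rows of $P$ while sharing the same column index set, a vector $x\in\bbC^n$ satisfies $Px=0$ if and only if $P_i x=0$ for every $i$.

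The key step is the null-space inclusion $N:=\{x\in\bbC^n:Px=0\}\subseteq\{x\in\bbC^n:Qx=0\}$. Indeed, if $Px=0$ then $P_i x=0$ for all $i$, whence $Qx=\sum_{i=1}^m\lambda_i P_i x=0$. Because $P$ has column rank $k$, rank--nullity gives $\dim N=n-k$, so the right null space of $Q$ has dimension at least $n-k$. Applying rank--nullity to $Q$ then yields $r(Q)\le n-(n-k)=k$, which is precisely the first assertion that $Q$ has at most $k$ linearly independent columns.

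For the second assertion I would convert this inclusion into explicit column operations. Pick a basis $x_1,\dots,x_{n-k}$ of $N$, extend it to a basis $y_1,\dots,y_k,x_1,\dots,x_{n-k}$ of $\bbC^n$, and let $T$ be the invertible $n\times n$ matrix whose columns are these vectors in that order. Right multiplication by $T$ realizes a sequence of elementary column operations, and since $Qx_j=0$ for each $j$, the last $n-k$ columns of $QT$ vanish. Hence, after these operations, $Q$ has at most $k$ nonzero columns.

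I do not expect a genuine obstacle here: the content is essentially the observation that column dependencies survive taking linear combinations of the whole blocks. The only points requiring care are stating the blockwise equivalence $Px=0\Leftrightarrow P_i x=0\ (\forall i)$ correctly, which rests on $P$ being the vertical stack of the $P_i$, and identifying elementary column operations with right multiplication by an invertible matrix so that the reduction to $k$ nonzero columns is legitimate.
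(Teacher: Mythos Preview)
Your proof is correct. Both your argument and the paper's rest on the same key observation: any linear relation among the columns of the stacked matrix $P$ is inherited by each block $P_i$, and hence by any $Q\in\lin\{P_1,\dots,P_m\}$. The paper expresses this concretely by fixing $k$ independent columns of $P$, writing the remaining columns explicitly as combinations $\beta_j=\sum_a c_{a,j}\beta_a$, pushing these relations down to each $P_i$, and then verifying coefficient-by-coefficient that the columns $\alpha_j$ of $Q$ satisfy the same relations; the elimination of columns $k+1,\dots,n$ then uses precisely these $c_{a,j}$. Your null-space formulation packages the identical content via rank--nullity, and your invertible $T$ built from a basis of $N=\ker P$ plays exactly the role of the paper's explicit column elimination. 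Neither approach yields anything the other lacks; yours is simply a cleaner abstraction of the paper's coordinate computation, while the paper's version makes visible that a \emph{single} set of column operations (independent of $Q$) works uniformly---a fact your choice of $T$ from $\ker P$ (rather than $\ker Q$) also delivers.
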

\begin{proof}
	Without loss of generality, assume that the first $k$ columns of $\bma P_1\\ P_2\\ \vdots\\ P_m \ema$ are linearly independent and the remaining columns are in the span of them. Define $\beta_j$ as the $j$-th column of $\bma P_1\\ P_2\\ \vdots\\ P_m \ema$, where $1\le j\le n$. We set
	\begin{eqnarray}
	\label {expr1}
	\notag \beta_{k+1}&=&c_{1,k+1}\beta_1+c_{2,k+1}\beta_2+\cdots+c_{k,k+1}\beta_k,\\
	\notag \beta_{k+2}&=&c_{1,k+2}\beta_1+c_{2,k+2}\beta_2+\cdots+c_{k,k+2}\beta_k,\\
	\notag
	&\vdots& \\
	\beta_{n}&=&c_{1,n}\beta_1+c_{2,n}\beta_2+\cdots+c_{k,n}\beta_k,
	\end{eqnarray}
	where $c_{a,b}$ is the combination cofficient, $1\le a\le k$,  and $k+1\le b\le n$. 
	Define $\beta_{j,i}$ as the $j$-th column of $P_i$, where $1\le j\le n$.
	From (\ref{expr1}) we obtain that
	\begin{eqnarray}
	\label {expr33}
	\notag \beta_{k+1,i}&=&c_{1,k+1}\beta_{1,i}+c_{2,k+1}\beta_{2,i}+\cdots+c_{k,k+1}\beta_{k,i},\\
	\notag \beta_{k+2,i}&=&c_{1,k+2}\beta_{1,i}+c_{2,k+2}\beta_{2,i}+\cdots+c_{k,k+2}\beta_{k,i},\\
	\notag
	&\vdots& \\
	\beta_{n,i}&=&c_{1,n}\beta_{1,i}+c_{2,n}\beta_{2,i}+\cdots+c_{k,n}\beta_{k,i}
	\end{eqnarray}
	hold for any $1 \le i\le m$.
	Since $Q$ is in the span of $P_1,\cdots, P_m$, assume 
	\begin{eqnarray}
	\label{Q}
	Q=p_1P_1+\cdots +p_mP_m,
	\end{eqnarray}
	with the combination cofficients $p_1,\cdots, p_m$. Define $\alpha_j$ as the $j$-th column of $Q$, where $1\le j\le n$. From (\ref{Q})  we have
	\begin{eqnarray}
	\label{ak+1}
	\notag
	\alpha_{k+1}&&=p_1\beta_{k+1,1}+\cdots +p_m\beta_{k+1,m} \\
	\notag
	&&=p_1(c_{1,k+1}\beta_{1,1}+c_{2,k+1}\beta_{2,1}+\cdots+c_{k,k+1}\beta_{k,1})\\
	\notag
	&&+\cdots\\
	&&+p_m(c_{1,k+1}\beta_{1,m}+c_{2,k+1}\beta_{2,m}+\cdots+c_{k,k+1}\beta_{k,m}).
	\end{eqnarray}
	From (\ref{expr33}) and (\ref{ak+1}), we have
	\begin{eqnarray}
	\label{ak+2}
	\notag
	\alpha_{k+1}&&=c_{1,k+1}(p_1\beta_{1,1}+\cdots+p_m\beta_{1,m})\\
	\notag
	&&+\cdots\\
	\notag
	&&+c_{k,k+1}(p_1\beta_{k,1}+\cdots+p_m\beta_{k,m})\\
	&&=c_{1,k+1}\alpha_1+\cdots+c_{k,k+1}\alpha_k.
	\end{eqnarray}
	We can use the same way of obtaining (\ref{ak+2}) to prove that
	\begin{eqnarray}
	\notag
	\alpha_{k+2}&=&c_{1,k+2}\alpha_1+\cdots+c_{k,k+2}\alpha_k,\\
	\notag
	&\vdots &\\
	\alpha_{n}&=&c_{1,n}\alpha_1+\cdots+c_{k,n}\alpha_k.
	\end{eqnarray}
	Hence any $\alpha_j$ is spanned by $\alpha_1,\cdots,\alpha_k$, where $k+1\le j\le n$. By elementary column operations on $Q$, we obtain that $\alpha_j$ becomes zero column for any $k+1\le j\le n$. And it has $k$ nonzero columns at most. We finished the proof.
\end{proof}

Using preceding lemmas, we present the proof of  Conjecture \ref{cj:1} in the next section. This is equivalent to proving \eqref{eq:main-inequality} in terms of \cite{chl14}.

\section{Proof} 
\label{sec:proof}

For any block matrix $M\in \bbM_{m_1,n_1}\ox\bbM_{m_2,n_2}$ with Schmidt rank $K \le m_1\cdot n_1$,  we write 
\begin{eqnarray}
\label{M}
M=\bma M_{1,1}&M_{1,2}&M_{1,3}&\cdots &M_{1,n_{1}}\\ M_{2,1}&M_{2,2}&M_{2,3}&\cdots &M_{2,n_{1}}\\
\vdots &\vdots &\vdots &\ddots &\vdots\\
M_{m_{1},1}&M_{m_{1},2}&M_{m_{1},3}&\cdots &M_{m_{1},n_{1}} \ema,
\end{eqnarray}
where $M$ has $K$ linearly independent blocks.

Our proof of Conjecture \ref{cj:1} is divided into three subsections. In subsection \ref{subsec: part1}, we proof Theorem \ref{canonical}, transforming $M$ to a canonical form $N$ in (\ref{N}) up to local equivalence. In subsection \ref{subsec: part2}, we present an equivalent form of $N$ in (\ref{Np}). In subsection \ref{subsec: part3}, we prove the conjecture by induction.
\subsection{A canonical form of $M$}
\label{subsec: part1}
We first present a set $\bbM_{canonical} \subseteq \bbM_{m_1,n_1}\ox\bbM_{m_2,n_2}$. Any block matrix $N\in \bbM_{canonical}$ can be written as 
\begin{eqnarray}
\label{N}
N=\bma 
\begin{array}{c|c|ccc|cc}
N_{1,1}&N_{1,2}&\multicolumn{1}{c|}{N_{1,3}}&\multicolumn{1}{c|}{\cdots}&N_{1,p}&\multicolumn{2}{c}{\multirow{5}{*}{$A_p$}}\\
\vdots &\vdots &\multicolumn{1}{c|}{\vdots} &\multicolumn{1}{c|}{\ddots}&\vdots \\N_{{k_p},1}&N_{{k_p},2}&\multicolumn{1}{c|}{N_{{k_p},3}}&\multicolumn{1}{c|}{\cdots}&N_{{k_p},p} \\
\cline{5-5}
\vdots &\vdots &\multicolumn{1}{c|}{\vdots} &\multicolumn{1}{c|}{\vdots} & \multicolumn{1}{c|}{\multirow{2}{*}{$A_{p-1}$}}\\
\cline{4-4}
\cline{6-7}

\vdots&\vdots&\multicolumn{1}{c|}{\vdots}& \multicolumn{1}{c|}{}&\multicolumn{1}{c}{} \\
\cline{5-7}
N_{{k_3},1} &N_{{k_3},2} &\multicolumn{1}{c|}{N_{{k_3},3}}& \multicolumn{4}{c}{\multirow{3}{*}{$\cdots$}}\\
\cline{3-3}
\vdots&\vdots&\multicolumn{1}{c|}{}\\
N_{{k_2},1}&N_{{k_2},2}&\multicolumn{1}{c|}{} & \\
\cline{2-2}
\cline{4-7}
\vdots &  &\multicolumn{5}{c}{\multirow{2}{*}{$A_2$}}\\N_{k_{1},1}&  \\
\cline{1-1}
\cline{3-7}
0&\multicolumn{4}{c}{\multirow{3}{*}{$A_1$}}\\
\vdots \\
0
\end{array}
\ema,
\end{eqnarray}
where $N_{1,1},\cdots,N_{k_1,1}, N_{1,2}, \cdots, N_{k_2,2},\cdots,N_{1,p},\cdots, N_{k_p,p}$ are linearly independent, and other blocks are in the span of them, $1\le p\le n_1$, and
\begin{eqnarray}
\label{sum}
1\le k_p\le \cdots \le k_2 \le k_1\le m_1, 
\end{eqnarray}
\begin{eqnarray}
\label{K}
k_1+k_2+\cdots+k_p=K. 
\end{eqnarray}
At the same time, every block in $A_i$ is spanned by $N_{1,1},\cdots,N_{k_1,1}$, $N_{1,2},\cdots,N_{k_2,2},\cdots,N_{1,i},\cdots, N_{k_i,i}$, namely
\begin{eqnarray}
\label{Aspan}
A_i\tilde{\in} \lin \{N_{1,1},\cdots,N_{k_1,1},N_{1,2},\cdots,N_{k_2,2},N_{1,i},\cdots, N_{k_i,i}\}
\end{eqnarray}
holds for any $1\le i\le p$.

Note that if $m_1=k_1$, then the zero blocks below $N_{k_1,1}$ disappear and $A_1$ becomes a $(k_1-k_2)\times 1$ rectangular block matrix. If $k_1=k_2$,  then $A_1$ becomes a $(m_1-k_1)\times (n_1-1)$ rectangular block matrix and $A_2$ becomes a $(k_2-k_3)\times 1$ rectangular block matrix. If there exists $i$ ($2\le i<p-1$) such that $k_i=k_{i+1}$, then $A_i$ becomes a $(k_{i-1}-k_i)\times (n_1-i)$ rectangular block matrix, and $A_{i+1}$ becomes a $(k_{i+1}-k_{i+2})\times 1$ rectangular block matrix. If $k_{p-1}=k_p$, then $A_{p-1}$ becomes a $(k_{p-2}-k_{p-1})\times (n_1-p+1)$ rectangular block matrix, and $A_{p}$ becomes a $k_{p}\times (n_1-p)$ rectangular block matrix.

We present the main result of this subsection.
\begin{theorem}
	\label{canonical}
	For any bipartite matrix $M$ in (\ref{M}), there exists a block matrix $N\in \bbM_{canonical}$ in (\ref{N}), such that $M$ is locally equivalent to $N$.
\end{theorem}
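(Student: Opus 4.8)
The plan is to transform $M$ into the staircase form $N$ by a greedy, column-by-column reduction that uses only block-row operations (replacing a block-row by a linear combination of block-rows) and block-column operations (the same for block-columns). Each block-row operation is left multiplication by a matrix of the form $U\otimes I_{m_2}$, and each block-column operation is right multiplication by one of the form $W\otimes I_{n_2}$, so both are instances of the local equivalence of Definition \ref{def:equivalent} with the inner factors taken to be identities; consequently any finite composition of them stays in the local-equivalence class of $M$, and it suffices to reach a matrix of the form (\ref{N}) from $M$ by such operations. The invariant I will maintain after processing the first $n$ block-columns is that columns $1,\dots,n$ carry linearly independent pivot blocks, with $k_j$ of them at the top of column $j$ and $k_1\ge\cdots\ge k_n$; that column $1$ vanishes below row $k_1$; and that the already-formed lower regions $A_1,\dots,A_n$ lie in the spans prescribed by (\ref{Aspan}).

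The engine of the reduction is Lemma \ref{lemma11}. For the first column I first use block-row operations to move a maximal linearly independent set of column-$1$ blocks to the top and clear the blocks below them (each lower block lies in their span, hence can be zeroed). I then apply Lemma \ref{lemma11}(i) repeatedly: as long as some block $M_{i,j}$ with $i$ beyond the current pivot rows and $j\ge2$ is independent of the current column-$1$ pivots, a single block-column operation raises the number of independent blocks in column $1$ by at least one, after which I re-clear the entries below by block-row operations and repeat. Since the pivot count is bounded by $\min(m_1,K)$, the process halts with some value $k_1$; at that moment no block below row $k_1$ is independent of the pivots, so the entire region below row $k_1$ lies in $\lin\{N_{1,1},\dots,N_{k_1,1}\}$ (column $1$ being zero there), which is the block $A_1$. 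For a general column $n\ge2$ I restrict attention to the top $k_{n-1}$ rows, use block-row operations confined to those rows to place the blocks independent of the pivots of columns $1,\dots,n-1$ at the top of column $n$ and to express the remaining ones through earlier pivots, and then raise the count by repeated use of Lemma \ref{lemma11}(ii) (for $n=2$) and Lemma \ref{lemma11}(iii) (for $n\ge3$), exactly as in the column-$1$ step. This produces $k_n$ new pivots in column $n$ and leaves the region in rows $k_n+1,\dots,k_{n-1}$ and columns $\ge n$ inside the span of the pivots of columns $1,\dots,n$, which is the block $A_n$ of (\ref{N}); the degenerate cases listed after (\ref{Aspan}) (some $k_i=k_{i+1}$, or $m_1=k_1$) simply correspond to some of these regions being empty.

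The bookkeeping then yields (\ref{sum}) and (\ref{K}): column $n$ is processed inside the $k_{n-1}$ rows surviving from the previous step, so $k_n\le k_{n-1}\le m_1$, and since the accumulated pivots are linearly independent while, at termination, every non-pivot block is spanned by them, the pivots form a basis of the $K$-dimensional span of all blocks of $M$, whence $k_1+\cdots+k_p=K$. The reduction stops at the first $p$ for which the active region contains no block independent of the accumulated pivots. The step I expect to be the main obstacle --- and the one demanding the most care --- is verifying that processing column $n$ does not disturb the staircase already built in columns $1,\dots,n-1$: the block-row operations are confined to the top $k_{n-1}$ rows, so they merely replace the earlier pivots by invertible linear combinations (keeping them independent and preserving the zeros in column $1$), while the block-column operations supplied by Lemma \ref{lemma11} only add multiples of columns $j>n$ into column $n$ and never touch columns $<n$; one must still check that these additions keep each earlier region $A_1,\dots,A_{n-1}$ inside its prescribed span. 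Establishing this invariance, together with the maximality argument that identifies the terminal spanned regions with the blocks $A_i$ of (\ref{Aspan}), is the core of the proof.
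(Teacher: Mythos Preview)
Your proposal is correct and follows essentially the same route as the paper: a greedy left-to-right column reduction, using block-row operations to stack independent blocks at the top and Lemma \ref{lemma11}(i)--(iii) to enlarge the pivot set in the current column by borrowing from columns to its right, terminating when the residual region is spanned by the accumulated pivots. The paper organizes this into explicit Steps 1--7 (first reaching the intermediate forms $M'$ and $M''$ for columns $1$ and $2$, then iterating), but the logic, the invariants you list, and the role of Lemma \ref{lemma11} are identical; the invariance concern you flag is exactly what the paper handles implicitly, and your sketch of why the block-row moves (confined to the active rows) and the block-column moves (touching only column $n$) preserve the earlier $A_i$'s and pivot independence is the right argument.
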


\begin{proof}
	Our first aim is to obtain that
	\begin{eqnarray}
	\label{M'}
	M\sim M'=
	\bma 
	\begin{array}{c| cccc}
	M_{1,1}&M_{1,2}&M_{1,3}&\cdots &M_{1,n_{1}}\\ M_{2,1}&M_{2,2}&M_{2,3}&\cdots &M_{2,n_{1}}\\
	\vdots &\vdots &\vdots &\ddots &\vdots\\
	M_{k_{1},1}&M_{k_{1},2}&M_{k_{1},3}&\cdots&M_{k_{1},n_{1}}\\
	\hline
	0&
	\multicolumn{4}{|c}{\multirow{3}{*}{$A_1$}}\\
	\vdots& \\
	0
	\end{array}
	\ema,
	\end{eqnarray}
	where $k_1\le m_1$, and $M_{1,1},\cdots, M_{k_1,1}$ are linearly independent, and all the blocks in $A_1$ are spanned by $M_{1,1},\cdots, M_{k_1,1}$, namely
	\begin{eqnarray}
	\label{A1span}
	A_1\tilde{\in}\lin \{M_{1,1},\cdots,M_{k_1,1}\}.
	\end{eqnarray}
	Note that the blocks $M_{i,j}$ in (\ref{M'}) may be different from those in (\ref{M}), and we use the same symbols $M_{i,j}$, when there is no confusion, throughtout the proof. We apply the following three steps, namely {\textbf{Steps}} \hyperlink{step1}{1}-\hyperlink{step3}{3}, to achieve this aim.
	
	{\textbf{Step}} \hypertarget{step1}{{\textbf{1}}}
	Consider the first block-column of $M$ in (\ref{M}). Assume that it has $s_1$ linearly independent blocks, $s_1\le m_1$.
	Then by block-row operations on $M$, we obtain that 
	\begin{eqnarray}
	\label{M2}
	M\sim M_1=\bma M_{1,1}&M_{1,2}&M_{1,3}&\cdots &M_{1,n_{1}}\\ M_{2,1}&M_{2,2}&M_{2,3}&\cdots &M_{2,n_{1}}\\
	\vdots &\vdots &\vdots &\ddots &\vdots\\M_{s_{1},1}&M_{s_{1},2}&M_{s_{1},3}&\cdots&M_{s_{1},n_{1}}\\0&M_{s_{1}+1,2}&M_{s_{1}+1,3}&\cdots&M_{s_{1}+1,n_1}\\\vdots&\vdots&\vdots&\ddots&\vdots\\
	0&M_{m_{1},2}&M_{m_{1},3}&\cdots &M_{m_{1},n_{1}} \ema,
	\end{eqnarray}
	where $M_{1,1},\cdots, M_{s_1,1}$ are linearly independent. From (\ref{M2}), it is obvious that $M$ satisfies (\ref{M'}) if $s_1=m_1$ or $M_{i,j}$ is linearly dependent with $M_{1,1},\cdots, M_{s_1,1}$ for any $s_1+1\le i\le m_1$ and $2\le j\le n_1$. 
	
	{\textbf{Step}} \hypertarget{step2}{{\textbf{2}}}
	Suppose $s_1<m_1$ and there exists a block $M_{i,j}$ that is linearly independent with $M_{1,1},\cdots, M_{s_1,1}$ for $s_1+1\le i\le m_1$ and $2\le j\le n_1$. Using Lemma \ref{lemma11}, we multiply an appropriate constant to the $j$-th block-column of $M_1$ and add to the first block-column, such that the new first block-column has $s_1+a$ linearly independent blocks, $1\le a\le m_1-s_1$. Next, by repeating {\textbf{Step}} \hyperlink{step1}{1}, we have 
	\begin{eqnarray}
	\label{M3}
	M_1\sim M_2=\bma M_{1,1}&M_{1,2}&M_{1,3}&\cdots &M_{1,n_{1}}\\ M_{2,1}&M_{2,2}&M_{2,3}&\cdots &M_{2,n_{1}}\\
	\vdots &\vdots &\vdots &\ddots &\vdots\\M_{s_{1}+a,1}&M_{s_{1}+a,2}&M_{s_{1}+a,3}&\cdots&M_{s_{1}+a,n_{1}}\\0&M_{s_{1}+a+1,2}&M_{s_{1}+a+1,3}&\cdots&M_{s_{1}+a+1,n_1}\\\vdots&\vdots&\vdots&\ddots&\vdots\\
	0&M_{m_{1},2}&M_{m_{1},3}&\cdots &M_{m_{1},n_{1}} \ema,
	\end{eqnarray}
	where $M_{1,1},\cdots, M_{s_1+a,1}$ are linearly independent.
	
	{\textbf{Step}} \hypertarget{step3}{{\textbf{3}}}
	For $M_2$, if $s_1+a<m_1$ and there exists a block $M_{i,j}$ that is linearly independent with $M_{1,1},\cdots, M_{s_1+a,1}$ for $s_1+a+1\le i\le m_1$ and $2\le j\le n_1$, then  repeat {\textbf{Steps}} \hyperlink{step1}{2} and  \hyperlink{step1}{1}. Eventually we obtain that $M\sim M'$ in (\ref{M'}). \\
	
	Recall that $M\in \bbM_{m_1,n_1}\ox\bbM_{m_2,n_2}$ in (\ref{M}) has Schmidt rank $K$, and hence $M'$ has Schmidt rank $K$. If $k_1=K$ in $M'$ in (\ref{N}), then $M'\sim N$ with $p=1$ in terms of (\ref{M'}). Hence we suppose
	\begin{eqnarray}
	\label{k1<K}
	k_1<K
	\end{eqnarray}
	in $M'$. Then our next aim is to obtain that 
	\begin{eqnarray}
	\label{M''}
	M'\sim M''=\bma 
	\begin{array}{c|c|ccc}
	M_{1,1}&M_{1,2}&M_{1,3}&\cdots &M_{1,n_{1}}\\ M_{2,1}&M_{2,2}&M_{2,3}&\cdots &M_{2,n_{1}}\\
	\vdots &\vdots &\vdots &\ddots &\vdots\\M_{{k_2},1}&M_{{k_2},2}&M_{{k_2},3}&\cdots&M_{{k_2},n_1}\\
	\cline{2-5}
	M_{{k_2}+1,1}& &\multicolumn{3}{c}{\multirow{3}{*}{$A_2$}}
	\\
	\vdots&  & & & \\M_{k_{1},1}& & & & \\
	\cline{1-1}
	\cline{3-5}
	0&\multicolumn{4}{c}{\multirow{3}{*}{$A_1$}}\\
	\vdots \\
	0
	\end{array}
	\ema,
	\end{eqnarray}
	where $1\le k_2\le k_1\le m_1$, and $M_{1,1},\cdots,M_{k_1,1}, M_{1,2}, \cdots, M_{k_2,2}$ are linearly independent. At the same time, (\ref{A1span}) still holds and 
	\begin{eqnarray}
	\label{A2span}
	A_2\tilde{\in}\lin \{M_{1,1},\cdots,M_{k_1,1}, M_{1,2}, \cdots, M_{k_2,2}\}.
	\end{eqnarray}
	We next apply the following four steps, namely \textbf{Steps} \hyperlink{step4}{4}-\hyperlink{step7}{7}, to achieve this aim.

	{\textbf{Step}} \hypertarget{step4}{{\textbf{4}}}
	Consider the matrix $M'$ in (\ref{M'}). Because of (\ref{A1span}) and (\ref{k1<K}), we obtain that there exists a block $M_{i,j}$ that is linearly independent with $M_{1,1},\cdots, M_{k_1,1}$ for $1\le i\le k_1$ and $2\le j\le n_1$.  By block-row and block-column switches on $M'$ in (\ref{M'}), $M_{i,j}$ becomes $M_{1,2}$, and $M_{1,1},\cdots,M_{k_1,1}, M_{1,2}$ are linearly independent.
	
	{\textbf{Step}} \hypertarget{step5}{{\textbf{5}}}
	For the new $M'$ in (\ref{M'}), if
	there exists a block $M_{i,2}$ that is linearly independent with $M_{1,1},\cdots, M_{k_1,1}, M_{1,2}$ for $2\le i\le k_1$, then by block-row switch, this block becomes $M_{2,2}$. 
	We repeat from the beginning of {\textbf{Step}} \hyperlink{step1}{5}, until we obtain that 
	\begin{eqnarray}
	\label{M4}
	M'\sim M_3=\bma 
	\begin{array}{c| cccc}
	M_{1,1}&M_{1,2}&M_{1,3}&\cdots &M_{1,n_{1}}\\ M_{2,1}&M_{2,2}&M_{2,3}&\cdots &M_{2,n_{1}}\\
	\vdots &\vdots &\vdots &\ddots &\vdots\\M_{{s_2},1}&M_{{s_2},2}&M_{{s_2},3}&\cdots&M_{{s_2},n_1}\\M_{{s_2}+1,1}&M_{{s_2}+1,2}&M_{{s_2}+1,3}&\cdots&M_{{s_2}+1,n_1}
	\\\vdots &\vdots &\vdots &\ddots &\vdots\\M_{k_{1},1}&M_{k_{1},2}&M_{k_{1},3}&\cdots&M_{k_{1},n_{1}}\\
	\hline
	0&
	\multicolumn{4}{|c}{\multirow{3}{*}{$A_1$}}\\
	\vdots& \\
	0
	\end{array}
	\ema,
	\end{eqnarray}
	where $1\le s_2\le k_1$, and $M_{1,1},\cdots,M_{k_1,1}, M_{1,2}, \cdots, M_{s_2,2}$ are linearly independent. At the same time, 
	\begin{eqnarray}
	\label{Mi2}
	M_{i,2}\in \lin \{M_{1,1},\cdots,M_{k_1,1}, M_{1,2}, \cdots, M_{s_2,2}\}
	\end{eqnarray}
	holds for any $s_2+1\le i\le k_1$. 
	
	{\textbf{Step}} \hypertarget{step6}{{\textbf{6}}}
	By (\ref{Mi2}), we apply block-row operations on $M_3$ in (\ref{M4}) such that 
	\begin{eqnarray}
	\label{i2}
	M_{i,2}\in \lin \{M_{1,1},\cdots,M_{k_1,1}\}
	\end{eqnarray}
	holds for any $s_2+1\le i\le k_1$.
	
	{\textbf{Step}} \hypertarget{step7}{{\textbf{7}}} 
	From (\ref{A1span}) and (\ref{i2}), one can obtain that $M_3\sim M''$ in (\ref{M''})  if $s_2=k_1$ or $M_{i,j}$ is linearly dependent with $M_{1,1},\cdots,M_{k_1,1}, M_{1,2}, \cdots, M_{s_2,2}$ for any $s_2+1\le i\le k_1$ and $3\le j\le n_1$.
	
	Suppose $s_2<k_1$ and there exists a block $M_{ij}$ ($s_2+1\le i\le k_1$, $3\le j\le n_1$) that is linearly independent with  $M_{1,1},\cdots,M_{k_1,1}$, $M_{1,2}, \cdots, M_{s_2,2}$ in (\ref{M4}).  Using Lemma \ref{lemma11}, we multiply an appropriate constant to the $j$-th block-column of $M_3$ and add to the second block-column, such that the new second block-column has $s_2+b$ linearly independent blocks, $1\le b\le k_1-s_2$, and they are linearly independent with $M_{1,1},\cdots,M_{k_1,1}$. For the new $M_3$, we next repeat {\textbf{Steps}} \hyperlink{step5}{5}, \hyperlink{step6}{6} and \hyperlink{step7}{7}. Finally we obtain that $M'\sim M''$ in (\ref{M''}), and $M''$ has Schmidt rank $K$.\\
	
	If $k_1+k_2=K$, then  $M''\sim N$ with $p=2$ in (\ref{N}). 
	On the other hand if $k_1+k_2<K$, then we consider the third block-column of $M''$ using the same way from { \textbf{Step}} \hyperlink{step4}{4} to \hyperlink{step7}{7}.  Continue this process until we obtain that $M''\sim N$.
	
	By achieving (\ref{M'}) and (\ref{M''}), we have shown that $M\sim M'\sim M''\sim N$ in (\ref{N}). Hence we have finished the proof.
\end{proof}

\subsection{Equivalent form of $M$}
\label{subsec: part2}
In subsection \ref{subsec: part1}, we have shown that any bipartite matrix $M$ has a canonical form $N\in \bbM_{canonical}$ in (\ref{N}) up to local equivalence. In this subsection, we continue to apply three steps on $N$ , namely {\textbf{Steps}} \hyperlink{step8}{8}-\hyperlink{step10}{10}, to obtain another form of $M$ in (\ref{Np}) up to local equivalence.

{\textbf{Step}} \hypertarget{step8}{{\textbf{8}}}
Denote
$*_i$ as a matrix that contains exactly $i$ columns, $\lambda_i$ as a matrix that contains exactly $i$ column vectors and they are linearly independent, $0_i$ as a matrix that contains exactly $i$ columns and they are all zero column vectors.	
Denote $*_i^T$ as a matrix that contains exactly $i$ rows, $0_i^T$ as a matrix that contains exactly $i$ zero rows. Denote $0_{i}^{\G_B}$ as the partial transpose of system $B$ of a block matrix $0_i$.

Recall that $N\in \bbM_{m_1,n_1}\ox\bbM_{m_2,n_2}$ in (\ref{N}). Consider the $k_1$ linearly independent blocks in the first block-column of $N$, and they form a matrix of $n_2$ columns. If the matrix has $r_1$ linearly independent column vectors, then 
\begin{eqnarray}
r_1\le \min\{n_2, r(N)\}.
\end{eqnarray}
Next, we can find an order-$n_2$ invertible matrix $R_1$ such that
\begin{eqnarray}
\label{fc1}
\bma 
N_{1,1}\\ 
\vdots\\
N_{k_{1},1}
\ema \cdot R_1=
\bma 
\begin{array}{c}
\multirow{3}{*}{$\lambda_{r_1}$ $0_{n_2-r_1}$}\\   \\
\\

\end{array}
\ema,
\end{eqnarray} 
where the leftmost $r_1$ column vectors are linearly independent, and the rightmost $n_2-r_1$ column vectors are zero vectors. 
Denote $N_{i,j}^{(A_k)}$ as a block $N_{i,j}$ in $A_k$, where $1\le k\le p$ in (\ref{N}). Recall that $A_1\tilde{\in} \lin \{N_{1,1},\cdots,N_{k_1,1}\}$ from (\ref{Aspan}).  Using Lemma \ref{lemma6} and (\ref{fc1}), we have 
\begin{eqnarray}
\label{fc2}
N_{i,j}^{(A_1)} \cdot R_1=
\bma 
\begin{array}{c}
\multirow{2}{*}{$*_{r_1}$ $0_{n_2-r_1}$}\\   \\

\end{array}
\ema,
\end{eqnarray} 
where the rightmost $n_2-r_1$ columns are zero column vectors. 
Let 
\begin{eqnarray}  
N_1=N\cdot (I_{n_1}\otimes R_1),
\end{eqnarray}   
From (\ref{N}), (\ref{fc1}) and (\ref{fc2}) we have 
\begin{eqnarray}
\label{N1}
N\sim N_1=\bma 
\begin{array}{c|c|ccc|cc}
\multirow{10}{*}{$\lambda_{r_1}$ $0_{n_2-r_1}$}&N_{1,2}&\multicolumn{1}{c|}{N_{1,3}}&\multicolumn{1}{c|}{\cdots}&N_{1,p}&\multicolumn{2}{c}{\multirow{5}{*}{$A_p$}}\\
&\vdots &\multicolumn{1}{c|}{\vdots} &\multicolumn{1}{c|}{\ddots}&\vdots \\ &N_{{k_p},2}&\multicolumn{1}{c|}{N_{{k_p},3}}&\multicolumn{1}{c|}{\cdots}&N_{{k_p},p} \\
\cline{5-5}
&\vdots &\multicolumn{1}{c|}{\vdots} &\multicolumn{1}{c|}{\vdots} & \multicolumn{1}{c|}{\multirow{2}{*}{$A_{p-1}$}}\\
\cline{4-4}
\cline{6-7}

&\vdots&\multicolumn{1}{c|}{\vdots}& \multicolumn{1}{c|}{}&\multicolumn{1}{c}{} \\
\cline{5-7}
&N_{{k_3},2} &\multicolumn{1}{c|}{N_{{k_3},3}}& \multicolumn{4}{c}{\multirow{3}{*}{$\cdots$}}\\
\cline{3-3}
&\vdots&\multicolumn{1}{c|}{}\\
&N_{{k_2},2}&\multicolumn{1}{c|}{} & \\
\cline{2-2}
\cline{4-7}
& \multirow{2}{*}{$*_{r_1}$ $0_{n_2-r_1}$} &\multicolumn{5}{c}{\multirow{2}{*}{$A_2$}}\\ &  \\
\cline{1-1}
\cline{3-7}
\multirow{3}{*}{$0_{n_2}$}&\multicolumn{1}{c}{\multirow{3}{*}{$*_{r_1}$ $0_{n_2-r_1}$}}&\multicolumn{1}{c}{\multirow{3}{*}{$*_{r_1}$ $0_{n_2-r_1}$}}&\multirow{3}{*}{$\cdots$}&\multicolumn{1}{c}{\multirow{3}{*}{$*_{r_1}$ $0_{n_2-r_1}$}}&\multicolumn{2}{c}{\multirow{3}{*}{$\cdots$}}\\
& \multicolumn{1}{c}{}\\
&\multicolumn{1}{c}{}
\end{array}
\ema,
\end{eqnarray}
where (\ref{sum}), (\ref{K}) and (\ref{Aspan})  still hold by Lemma \ref{lemmaadd}. Note that $N_{i,j}\cdot R_1$ has been denoted by $N_{i,j}$ still, when there is no confusion. The same denotion applies also to $A_2,\cdots, A_p$, as well as the end of {\textbf{Steps}} \hyperlink{step9}{9}  and \hyperlink{step10}{10}.

{\textbf{Step}} \hypertarget{step9}{{\textbf{9}}}
Consider the $k_2$ linearly independent blocks in the second block-column of $N_1$ in (\ref{N1}).  We write 
\begin{eqnarray}
\label{fc3}
\bma 
N_{1,2}\\
\vdots\\
N_{k_{2},2}
\ema=
\bma 
\begin{array}{c}
\multirow{3}{*}{$*_{r_1}$ $*_{n_2-r_1}$}\\   \\
\\
\end{array}
\ema,
\end{eqnarray} 
and next consider the rightmost $n_2-r_1$ columns in (\ref{fc3}). Suppose they form a matrix of $r_2$ linearly independent column vectors, then 
\begin{eqnarray}
r_2\le \min\{n_2-r_1, r(N_1)\}.
\end{eqnarray}
Thus we can find an order-($n_2-r_1$) invertible matrix $R_2$ such that
\begin{eqnarray}
\label{fc4}
\bma 
N_{1,2}\\ 
\vdots\\
N_{k_{2},2}
\ema \cdot 
\bma
I_{r_1}&0\\
0&R_2
\ema
=
\bma 
\begin{array}{c}
\multirow{3}{*}{$*_{r_1}$ $\lambda_{r_2}$ $0_{n_2-r_1-r_2}$}\\   \\
\\
\end{array}
\ema,
\end{eqnarray}  
where the middle $r_2$ column vectors are linearly independent and the rightmost $n_2-r_1-r_2$ column vectors are zero vectors. Further, for $N_1$ in (\ref{N1}), we have 
\begin{eqnarray}
\label{fc5}
N_{i,j}^{(A_1)} \cdot 
\bma
I_{r_1}&0\\
0&R_2
\ema=
\bma
\begin{array}{c}
\multirow{2}{*}{$*_{r_1}$ $0_{n_2-r_1}$} \\ \\ \end{array}
\ema
\cdot 
\bma
I_{r_1}&0\\
0&R_2
\ema=N_{i,j}^{(A_1)}.
\end{eqnarray}
Recall that $A_2\tilde{\in} \lin \{N_{1,1},\cdots,N_{k_1,1},N_{1,2},\cdots,N_{k_2,2}\}$ from (\ref{Aspan}). Using Lemma \ref{lemma6} and (\ref{fc4}), we have 
\begin{eqnarray}
\label{fc6}
N_{i,j}^{(A_2)} \cdot 
\bma
I_{r_1}&0\\
0&R_2
\ema=
\bma 
\begin{array}{c}
\multirow{2}{*}{$*_{r_1+r_2}$ $0_{n_2-r_1-r_2}$}\\  \\

\end{array}
\ema.
\end{eqnarray} 
Let \begin{eqnarray}
N_2=N_1\cdot (I_{n_1}\otimes \bma
I_{r_1}&0\\
0&R_2
\ema).
\end{eqnarray} 
From (\ref{N1}), (\ref{fc4}), (\ref{fc5}) and (\ref{fc6}), we have 
\begin{eqnarray}
\label{N2}
\notag &&N_1\sim N_2=
\\ \notag
&&\bma 
\begin{array}{c|c|c|ccc|cc}
\multirow{12}{*}{$\lambda_{r_1}$ $0_{n_2-r_1}$}&\multirow{10}{*}{$*_{r_1}$ $\lambda_{r_2}$ $0_{n_2-r_1-r_2}$}&N_{1,3}&\multicolumn{1}{c|}{N_{1,4}}&\multicolumn{1}{c|}{\cdots}&N_{1,p}&\multicolumn{2}{c}{\multirow{6}{*}{$A_p$}}\\
&  &\vdots&\multicolumn{1}{c|}{\vdots} &\multicolumn{1}{c|}{\ddots}&\vdots \\& &N_{{k_p},3}&\multicolumn{1}{c|}{N_{{k_p},4}}&\multicolumn{1}{c|}{\cdots}&N_{{k_p},p} \\
\cline{6-6}
&  & \vdots&\multicolumn{1}{c|}{\vdots} & \multicolumn{1}{c|}{\vdots}&\multirow{2}{*}{$A_{p-1}$} & &\\
\cline{5-5}
\cline{7-8}
&  &\vdots &\multicolumn{1}{c|}{\vdots} &\multicolumn{1}{c|}{}& \multicolumn{1}{c}{}&\\
\cline{6-8}
&  &N_{k_4,3} &\multicolumn{1}{c|}{N_{k_4,4}} &\multicolumn{4}{c}{\multirow{3}{*}{$\cdots$}} \\
\cline{4-4}
&  &\vdots &\multicolumn{1}{c|}{} &\multicolumn{1}{c}{}& \multicolumn{1}{c}{}&\\
&  & N_{k_3,3}& \multicolumn{1}{c|}{}&\multicolumn{4}{c}{ }\\
\cline{3-3}
\cline{5-8}
& &\multirow{2}{*}{$*_{r_1+r_2}$ $0_{n_2-r_1-r_2}$} &\multicolumn{5}{c}{\multirow{2}{*}{$A_3$}}\\
& &\multicolumn{1}{c|}{} & \multicolumn{5}{c}{}\\
\cline{2-2}
\cline{4-8}
& \multirow{2}{*}{$*_{r_1}$ $0_{n_2-r_1}$}&\multicolumn{1}{c}{\multirow{2}{*}{$*_{r_1+r_2}$ $0_{n_2-r_1-r_2}$}}&\multirow{2}{*}{$\cdots$}&\multicolumn{1}{c}{\multirow{2}{*}{$\cdots$}}&\multicolumn{1}{c}{\multirow{2}{*}{$*_{r_1+r_2}$ $0_{n_2-r_1-r_2}$}} &\multicolumn{2}{c}{\multirow{2}{*}{$\cdots$}} \\&  &\multicolumn{1}{c}{} & &\multicolumn{1}{c}{} &\multicolumn{1}{c}{} &\\
\cline{1-1}
\cline{3-8}
\multirow{3}{*}{$0_{n_2}$}&\multicolumn{1}{c}{\multirow{3}{*}{$*_{r_1}$ $0_{n_2-r_1}$}}&\multicolumn{1}{c}{\multirow{3}{*}{$*_{r_1}$ $0_{n_2-r_1}$}}&\multirow{3}{*}{$\cdots$}&\multicolumn{1}{c}{\multirow{3}{*}{$\cdots$}} &\multicolumn{1}{c}{\multirow{3}{*}{$*_{r_1}$ $0_{n_2-r_1}$}}& \multicolumn{2}{c}{\multirow{3}{*}{$\cdots$}}  \\   &\multicolumn{1}{c}{} &\multicolumn{1}{c}{} & &\multicolumn{1}{c}{}& \multicolumn{1}{c}{} \\
&\multicolumn{1}{c}{} & \multicolumn{1}{c}{}& &\multicolumn{1}{c}{}&\multicolumn{1}{c}{} \\

\end{array}
\ema,\\
\end{eqnarray}
where (\ref{sum}), (\ref{K}) and (\ref{Aspan}) still hold by Lemma \ref{lemmaadd}. Note that if $r_1=n_2$, then we have $N_1=N_2$, i.e., the rightmost $n_2-r_1$ columns of each block in $N_2$ disappear.

{\textbf{Step}} \hypertarget{step10}{{\textbf{10}}}
Consider the $k_3$ linearly independent blocks in the third block-column of $N_2$ using the same way in {\textbf{Steps}} \hyperlink{step8}{8}  and \hyperlink{step9}{9}. Suppose the rightmost $n_2-r_1-r_2$ columns of the block matrix have $r_3$ linearly independent column vectors. 
Then
\begin{eqnarray}
r_3\le \min\{n_2-r_1-r_2, r(N_2)\}.
\end{eqnarray}
We can find an order-($n_2-r_1-r_2$) invertible matrix $R_3$ such that
\begin{eqnarray}
\label{fc7}
\bma 
N_{1,3}\\ 
\vdots\\
N_{k_{3},3}
\ema \cdot 
\bma
I_{r_1+r_2}&0\\
0&R_3
\ema
=
\bma 
\begin{array}{c}
\multirow{3}{*}{$*_{r_1+r_2}$ $\lambda_{r_3}$ $0_{n_2-r_1-r_2-r_3}$}\\   \\
\\
\end{array}
\ema.
\end{eqnarray} 
Let 
\begin{eqnarray}
N_3=N_2\cdot (I_{n_1}\otimes \bma
I_{r_1+r_2}&0\\
0&R_3
\ema).
\end{eqnarray}
Hence $N_2\sim N_3$, and we next consider the fourth block-column of $N_3$. 
Continuing this process and using (\ref{N1}), (\ref{N2}), finally we obtain that 
{\large
	\begin{eqnarray}
	\label{Np}
	\notag &&N\sim N_1\sim \cdots \sim N_p
	=\\
	\notag 
	&&\bma 
	\begin{array}{c|c|c|c|c|c|c}
	\multirow{10}{*}{$\lambda_{r_1}$ $0_{n_2-r_1}$}&\multirow{8}{*}{$*_{r_1}$ $\lambda_{r_2}$ $0_{n_2-r_1-r_2}$}&\multirow{6}{*}{$*_{r_1+r_2}$ $\lambda_{r_3}$ $0_{n_2-r_1-r_2-r_3}$}&\cdots &\multirow{2}{*}{$*_ {r_1+\cdots+r_{p-1}} \lambda_{r_p} 0_{n_2-r_1-\cdots-r_p}$}&\multicolumn{2}{c}{\multirow{4}{*}{\textbf{$A_p$}}}\\  & & &\ddots & & \multicolumn{2}{c}{}\\
	\cline{5-5}
	&\multirow{5}{*}{}&\multirow{3}{*}{}&\cdots&\multirow{2}{*}{$*_{r_1+\cdots+r_{p-1}} 0_{n_2-r_1-\cdots r_{p-1}}$}&\multicolumn{2}{c}{\multirow{2}{*}{}}\\
	&\multirow{5}{*}{}&\multirow{3}{*}{}& \vdots& & \multicolumn{2}{c}{} \\
	\cline{4-4}
	\cline{6-7}
	& & &  &\multicolumn{1}{c}{\multirow{2}{*}{$*_{r_1+\cdots+r_{p-1}} 0_{n_2-r_1-\cdots r_{p-1}}$}} &\multicolumn{2}{c}{\multirow{2}{*}{$\cdots$}} \\
	
	& & &\multicolumn{1}{c|}{} &\multicolumn{1}{c}{ }&\multicolumn{2}{c}{} \\
	\cline{3-3}
	\cline{5-7}
	&\multirow{5}{*}{}&\multirow{2}{*}{$*_{r_1+r_2}$ $0_{n_2-r_1-r_2}$}&\multicolumn{4}{c}{\multirow{2}{*}{$\cdots$}} \\
	&\multirow{5}{*}{}& &\multicolumn{1}{c}{} &\multicolumn{1}{c}{} & \multicolumn{2}{c}{}\\
	\cline{2-2}
	\cline{4-7}
	& \multirow{2}{*}{$*_{r_1}$ $0_{n_2-r_1}$}&\multicolumn{1}{c}{\multirow{2}{*}{$*_{r_1+r_2}$ $0_{n_2-r_1-r_2}$}}&\multicolumn{1}{c}{\multirow{2}{*}{$\cdots$}}&\multicolumn{1}{c}{\multirow{2}{*}{$*_{r_1+r_2}$ $0_{n_2-r_1-r_2}$}}&\multicolumn{2}{c}{\multirow{2}{*}{$\cdots$}}\\
	& &\multicolumn{1}{c}{} &\multicolumn{1}{c}{} &\multicolumn{1}{c}{} & \multicolumn{2}{c}{}\\
	\cline{1-1}
	\cline{3-7}
	\multirow{3}{*}{$0_{n_2}$}&\multicolumn{1}{c}{\multirow{3}{*}{$*_{r_1}$ $0_{n_2-r_1}$}}&\multicolumn{1}{c}{\multirow{3}{*}{$*_{r_1}$ $0_{n_2-r_1}$}} &\multicolumn{1}{c}{\multirow{3}{*}{$\cdots$}}& \multicolumn{1}{c}{\multirow{3}{*}{$*_{r_1}$ $0_{n_2-r_1}$}}& \multicolumn{2}{c}{\multirow{3}{*}{$\cdots$}}\\
	&\multicolumn{1}{c}{} &\multicolumn{1}{c}{} &\multicolumn{1}{c}{} &\multicolumn{1}{c}{} &\multicolumn{2}{c}{}\\
	&\multicolumn{1}{c}{} & \multicolumn{1}{c}{}& \multicolumn{1}{c}{}&\multicolumn{1}{c}{} &\multicolumn{2}{c}{} 
	\end{array}
	\ema,\\
	\end{eqnarray}}
where $*_i, \lambda_i, 0_i$ are defined in {\textbf{Step}} \hyperlink{step8}{8} and $r_1+r_2+\cdots+r_p\le n_2$. At the same time, (\ref{sum}), (\ref{K}) and (\ref{Aspan}) still hold by Lemma \ref{lemmaadd}. At the same time, the rightmost ($n_2-r_1-\cdots-r_p$) column vectors of $N_{i,j}^{(A_p)}$ in (\ref{Np}) are zero vectors by (\ref{Aspan}) and Lemma \ref{lemma6}. One can also verify that if $r_1+\cdots+r_i=n_2$ for $1\le i<p$, then $N\sim N_{i}=N_{i+1}=\cdots=N_{p}$.

We have shown that $N\sim N_p$ in (\ref{N}) and (\ref{Np}), hence we have presented an equivalent form of $M$ in (\ref{M}).
\subsection{Proof of Conjecture \ref{cj:1}}
\label{subsec: part3}
In this subsection, we prove the conjectue by induction.
\begin{theorem}
	\label{thm:main}
	Conjecture \ref{cj:1} holds. Equivalently, the inequality \eqref{eq:main-inequality} holds for any tripartite mixed state $\r_{ABC}$.	
\end{theorem}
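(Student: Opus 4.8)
The plan is to establish the equivalent inequality $r(M^{\Gamma_B})\le Sr(M)\cdot r(M)$ of Conjecture \ref{cj:1} by induction on the Schmidt rank $K=Sr(M)$, working throughout with the representative $N_p$ of \eqref{Np}; this is legitimate because local equivalence preserves $r$, $Sr$, and the truth of the conjecture for both $M$ and $M^{\Gamma}$. For the base step I take $p=1$: then $k_1=K$ and all $K$ independent blocks lie in the first block-column, so Lemma \ref{lemma10} applies directly. Hence assume $p\ge 2$ (so that $1\le k_1\le K-1$) and that the conjecture holds for every matrix of Schmidt rank strictly below $K$.

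The heart of the argument is a reduction that deletes from $N_p$ both the first block-column and the first $r_1$ columns of the $B$-system. Call the result $N'$ (block-columns $2,\dots,n_1$ restricted to $B$-columns $r_1+1,\dots,n_2$). Two features of \eqref{Np} make this clean: the region $A_1$ is supported on the first $r_1$ $B$-columns and therefore vanishes in $N'$; and since every $N_{i,1}$ is supported on those same columns, its contribution to the span relations \eqref{Aspan} for each $A_l$ ($l\ge2$) is erased, leaving relations purely among block-columns $2,\dots,l$. I would check that the new directions $\lambda_{r_l}$ ($l\ge2$) survive the truncation, so that $N'$ is again canonical with exactly $K-k_1$ independent blocks; thus $Sr(N')=K-k_1<K$, and $r(N')\le r(N_p)$ as $N'$ is a column-submatrix.

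Next I split the rows of $N_p^{\Gamma_B}$ by the $B$-index $b$. The rows with $b>r_1$ constitute $(N')^{\Gamma_B}$ (block-column $1$ is zero there, each $N_{i,1}$ being supported on $b\le r_1$), so the induction hypothesis gives $r((N')^{\Gamma_B})\le (K-k_1)\,r(N')\le (K-k_1)\,r(N_p)$. The rows with $b\le r_1$ constitute $\tilde N^{\Gamma_B}$, where $\tilde N$ is the first $r_1$ $B$-columns of $N_p$; splitting these rows once more into block-rows $i\le k_1$ and $i>k_1$ yields at most $k_1 r_1$ rows from the top and exactly $A_1^{\Gamma_B}$ from the bottom, so $r(\tilde N^{\Gamma_B})\le k_1 r_1+r(A_1^{\Gamma_B})$. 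Since $Sr(A_1)\le k_1<K$, the induction hypothesis gives $r(A_1^{\Gamma_B})\le k_1\,r(A_1)$, hence $r(\tilde N^{\Gamma_B})\le k_1\big(r_1+r(A_1)\big)$. The final ingredient is $r_1+r(A_1)\le r(N_p)$, which follows from \eqref{baseineq2}: block-column $1$ vanishes on the block-rows occupied by $A_1$, so $N_p$ contains a block-triangular submatrix with diagonal blocks of ranks $r_1$ and $r(A_1)$. Adding the two pieces, $r(N_p^{\Gamma_B})\le k_1\,r(N_p)+(K-k_1)\,r(N_p)=K\,r(N_p)$, which closes the induction.

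The step I expect to be the real obstacle is precisely what forces the $B$-column truncation instead of a plain block-column deletion: removing only the first block-column leaves the $A_1$ blocks in place, and because \eqref{Aspan} chains every $A_l$ back to block-column $1$, the Schmidt rank need not decrease and the induction stalls. The nested column-support produced in Steps \hyperlink{step8}{8}--\hyperlink{step10}{10} (via Lemma \ref{lemma6}) is exactly what repairs this: excising the first $r_1$ $B$-columns annihilates $A_1$, decouples the spans, and its cost is controlled by the sharp bound $r(N_p)\ge r_1+r(A_1)$. Confirming that no independence is lost under the truncation, and that the two row-splits account for every nonzero entry of $N_p^{\Gamma_B}$ exactly once, are the bookkeeping points I would be most careful about.
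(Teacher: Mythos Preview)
Your argument is correct and rests on the same core idea as the paper---slice the $B$-columns of $N_p$ according to the staircase widths $r_1,\ldots,r_p$ and control each slice by induction---but the packaging differs. The paper decomposes $N_p=N_{q_1}+\cdots+N_{q_p}$ into all $p$ column-slices at once, applies the induction hypothesis to each residual block $\omega_s$ (of Schmidt rank $\le k_s<K$) to obtain $r(N_{q_s}^{\Gamma_B})\le k_s\,r(N_{q_s})\le k_s\,r(N_p)$, and sums over $s$. You instead peel off only the first slice $\tilde N$ (your treatment of it, with $A_1$ playing the role of $\omega_1$, is identical to the paper's treatment of $N_{q_1}$) and bundle the remaining $p-1$ slices into the single object $N'$, to which you apply the induction hypothesis directly rather than to its internal sub-pieces. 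Unrolling your recursion on $N'$ reproduces the paper's $p$-fold split, so the underlying computation is the same; your organization is a little more compact and avoids displaying all $p$ terms explicitly. Two minor remarks: what your proof actually uses is only $Sr(N')\le K-k_1$, which follows immediately once the truncation to $B$-columns $>r_1$ kills the $N_{i,1}$-contributions to the spans \eqref{Aspan}; the stronger assertion that $N'$ is ``again canonical with exactly $K-k_1$ independent blocks'' is unnecessary and would require its own verification. And your ``base step $p=1$'' is more accurately a subcase inside the inductive step on $K$ (the true base of the induction being $K=1$), though this does not affect the logic.
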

\begin{proof}
	 First, one can show that Conjecture \ref{cj:1} holds for any $M$ of Schmidt rank one. Next, suppose Conjecture \ref{cj:1} holds for any matrix of Schmidt rank at most $K-1$, with $K\ge 2$. We will prove that Conjecture \ref{cj:1} holds for any $M$ of Schmidt rank $K$.

	In subsections \ref{subsec: part1} and \ref{subsec: part2}, we have shown that proving $M$ satisfies Conjecture \ref{cj:1} is equivalent to proving $N_p\sim N$ in (\ref{Np}) satisfies Conjecture \ref{cj:1}. 
	Further, it has been proved by Lemma \ref{lemma10} that $N$ in (\ref{N}) satisfies Conjecture \ref{cj:1} if $k_1=K$ in (\ref{K}). So from (\ref{sum}) and (\ref{K}), we assume that
	\begin{eqnarray}
	\label{k1lessK}
	1\le k_s<K
	\end{eqnarray}
	holds for any $1\le s\le p$ in $N$ in (\ref{N}) and $N_p$ in (\ref{Np}).
	
	We next decompose $N_p$ in (\ref{Np}) into the sum of $p$ block matrices.
	Firstly, let  $N_{q_1}=[N_{i,j}^{(q_1)}]\in \bbM_{m_1,n_1}\ox\bbM_{m_2,n_2}$, where the first $r_1$ columns of $N_{i,j}^{(q_1)}$ in $N_{q_1}$ are exactly the first $r_1$ columns of $N_{i,j}$ in $N_p$ in (\ref{Np}), the remaining $n_2-r_1$ column vectors of $N_{i,j}^{(q_1)}$ in $N_{q_1}$ are zero vectors. 
	From (\ref{Np}), we have
	\begin{eqnarray}
	\notag
	N_{q_1}
	=&&\bma 
	\begin{array}{c|c|c|c}
	\multirow{3}{*}{$\lambda_{r_1}$ $0_{n_2-r_1}$}&\multirow{3}{*}{$*_{r_1}$ $0_{n_2-r_1}$} &\multirow{3}{*}{$\cdots$} &\multirow{3}{*}{$*_{r_1}$ $0_{n_2-r_1}$} \\
	& & & \\
	& & & \\
	\hline
	\multirow{2}{*}{$0_{n_2}$}&\multirow{2}{*}{$*_{r_1}$ $0_{n_2-r_1}$}&\multirow{2}{*}{$\cdots$} &\multirow{2}{*}{$*_{r_1}$ $0_{n_2-r_1}$}\\
	& & & 
	\end{array}
	\ema\\
	\label{Np11}
	:=&&\bma
	\begin{array}{c|ccc}
	N_{1,1}^{(q_1)}&N_{1,2}^{(q_1)}&\cdots&N_{1,n_1}^{(q_1)}\\
	\vdots&\vdots&\ddots&\vdots\\
	N_{k_1,1}^{(q_1)}&N_{k_1,2}^{(q_1)}&\cdots&N_{k_1,n_1}^{(q_1)}\\
	\hline
	\multirow{2}{*}{{\large$0_{n_2}$}}&\multicolumn{3}{c}{\multirow{2}{*}{{\huge $\omega$}$_{1}$}}\\
	& &
	\end{array}
	\ema,
	\end{eqnarray}
	where {\Large $\omega_{1}$} is an $(m_1-k_1)\times (n_1-1)$ rectangular block matrix, and $N_{i,1}^{(q_1)}=N_{i,1}$ in terms of (\ref{N}) and (\ref{Np}), $i=1,\cdots,k_1$. From (\ref{Aspan}), we obtain that 
	\begin{eqnarray}
	\label{fc9}
	\text{\Large $\omega_{1}$} \tilde{\in} \lin \{N_{1,1}^{(q_1)},\cdots,N_{k_1,1}^{(q_1)}\},
	\end{eqnarray}
	holds in (\ref{Np11}). Thus by (\ref{k1lessK}), we have
	\begin{eqnarray}
	\label{sr1}
	Sr(\text{\Large $\omega_{1}$})\le k_1<K,
	\end{eqnarray}
	i.e., the Schmidt rank of {\Large $\omega_{1}$} is less than $K$. At the same time, from (\ref{Np11}), using Lemma \ref{lemma1} and the definition of $\lambda_i$, we have
	\begin{eqnarray}
	\label{less6}
	r(\bma N_{1,1}^{(q_1)} \\ \vdots \\N_{k_1,1}^{(q_1)} \ema)+r(\text{\Large $\omega_{1}$})=
	r_1+r(\text{\Large $\omega_{1}$})\le r(N_{q_1}).
	\end{eqnarray}
	Secondly, let $N_{q_2}=[N_{i,j}^{(q_2)}]\in \bbM_{m_1,n_1}\ox\bbM_{m_2,n_2}$, where the $k$-th column of $N_{i,j}^{(q_2)}$ in $N_{q_2}$ is exactly the $k$-th column of $N_{i,j}$ in $N_p$ in (\ref{Np}), $r_1+1\le k\le r_1+r_2$. At the same time, the remaining $n_2-r_2$ column vectors of $N_{i,j}^{(q_2)}$ in $N_{q_2}$ are zero vectors. From (\ref{Np}), we have
	\begin{eqnarray}
	\notag
	N_{q_2}
	=&&
	\bma
	\begin{array}{c|c|c|c|c}
	\multirow{7}{*}{$0_{n_2}$}&\multirow{3}{*}{$0_{r_1} \lambda_{r_2} 0_{n_2-r_1-r_2}$}&\multirow{3}{*}{$0_{r_1} *_{r_2} 0_{n_2-r_1-r_2}$}&\multirow{3}{*}{$\cdots$}&\multirow{3}{*}{$0_{r_1} *_{r_2} 0_{n_2-r_1-r_2}$}\\
	& & & & \\
	& & & & \\
	\cline{2-5}
	& \multirow{2}{*}{$0_{n_2}$}&\multirow{2}{*}{$0_{r_1} *_{r_2} 0_{n_2-r_1-r_2}$}&\multirow{2}{*}{$\cdots$}&
	\multirow{2}{*}{$0_{r_1} *_{r_2} 0_{n_2-r_1-r_2}$}\\
	& & & &\\
	\cline{2-5}
	&\multirow{2}{*}{$0_{n_2}$}&\multirow{2}{*}{$0_{n_2}$}&\multirow{2}{*}{$\cdots$}&\multirow{2}{*}{$0_{n_2}$}\\
	& & & &
	\end{array}
	\ema\\
	\label{Np21}
	:=&&\bma 
	\begin{array}{c|c|ccc}
	\multirow{7}{*}{$0_{n_2}$}&N_{1,2}^{(q_2)}&N_{1,3}^{(q_2)}&\cdots&N_{1,n_1}^{(q_2)}\\
	&\vdots&\vdots&\ddots&\vdots\\
	&N_{k_2,2}^{(q_2)}&N_{k_2,3}^{(q_2)}&\cdots&N_{k_2,n_1}^{(q_2)}\\
	\cline{2-5}
	&\multirow{2}{*}{{\large$0_{n_2}$}}&\multicolumn{3}{c}{\multirow{2}{*}{{\huge $\omega$}$_{2}$}}\\
	& & & & \\
	\cline{2-5}
	&\multicolumn{4}{c}{\multirow{2}{*}{$0_{(n_1-1)n_2}$}}\\
	& \multicolumn{1}{c}{}&
	
	\end{array}
	\ema,
	\end{eqnarray}
	where  {\Large $\omega_{2}$} is a $(k_1-k_2)\times (n_1-2)$ rectangular block matrix. Recall that ({\ref{Aspan}}) holds in $N_p$ in (\ref{Np}), we have $\text{\Large $\omega_{2}$} \tilde{\in} \lin \{N_{1,1}^{(q_2)},\cdots,N_{k_1,1}^{(q_2)},N_{1,2}^{(q_2)},\cdots,N_{k_2,2}^{(q_2)}\}$ in $N_{q_2}$.  Note that  $N_{1,1}^{(q_2)},\cdots,N_{k_1,1}^{(q_2)}$ are zero matrices,
	therefore
	\begin{eqnarray}
	\label{fc10}
	\text{\Large $\omega_{2}$} \tilde{\in} \lin \{N_{1,2}^{(q_2)},\cdots,N_{k_2,2}^{(q_2)}\},
	\end{eqnarray}
	and by (\ref{k1lessK}), we have
	\begin{eqnarray}
	Sr(\text{\Large $\omega_{2}$})\le k_2<K.
	\end{eqnarray}
	At the same time, using Lemma \ref{lemma1} and (\ref{Np21}), we have
	\begin{eqnarray}
	\label{less7}
	r(\bma N_{1,2}^{(q_2)} \\ \vdots \\N_{k_2,2}^{(q_2)} \ema)+r(\text{\Large $\omega_{2}$})=
	r_2+r(\text{\Large $\omega_{2}$})\le r(N_{q_2}). 
	\end{eqnarray}
	Similarly, we continue to find $N_{q_s}=[N_{i,j}^{(q_s)}] \in \bbM_{m_1,n_1}\ox\bbM_{m_2,n_2}$, where $3\le s\le p$. For each $N_{q_s}$, the $k$-th column of $N_{i,j}^{(q_s)}$ is exactly  the $k$-th column of $N_{i,j}$ in $N_p$ in (\ref{Np}), $r_1+\cdots+r_{s-1}+1\le k\le r_1+\cdots+r_{s}$, and the remaining column vectors of $N_{i,j}^{(q_s)}$  are zero  vectors. Thus by (\ref{Np11}) and (\ref{Np21}), we obtain that
	{\large
		\begin{eqnarray}
		N_{q_s}&&=
		\notag
		\bma
		\begin{array}{c|c|c|c|c}
		\multirow{7}{*}{$0_{(s-1)n_2}$  }&\multirow{3}{*}{$0_{\sum_{i=1}^{s-1}r_i} \lambda_{r_s} 0_{n_2-\sum_{i=1}^{s}r_i}$}&\multirow{3}{*}{$0_{\sum_{i=1}^{s-1}r_i} *_{r_s} 0_{n_2-\sum_{i=1}^{s}r_i}$}&\multirow{3}{*}{$\cdots$}&\multirow{3}{*}{$0_{\sum_{i=1}^{s-1}r_i} *_{r_s} 0_{n_2-\sum_{i=1}^{s}r_i}$}\\
		& & & & \\
		& & & & \\
		\cline{2-5}
		& \multirow{2}{*}{$0_{n_2}$}&\multirow{2}{*}{$0_{\sum_{i=1}^{s-1}r_i} *_{r_s} 0_{n_2-\sum_{i=1}^{s}r_i}$}&\multirow{2}{*}{$\cdots$}&
		\multirow{2}{*}{$0_{\sum_{i=1}^{s-1}r_i} *_{r_s} 0_{n_2-\sum_{i=1}^{s}r_i}$}\\
		& & & &\\
		\cline{2-5}
		&\multirow{2}{*}{$0_{n_2}$}&\multirow{2}{*}{$0_{n_2}$}&\multirow{2}{*}{$\cdots$}&\multirow{2}{*}{$0_{n_2}$}\\
		& & & &
		\end{array}
		\ema
		\end{eqnarray}
	}
	\begin{eqnarray}
	\label{Nqs}
	&&:=\bma 
	\begin{array}{c|c|ccc}
	\multirow{8}{*}{$0_{(s-1)n_2}$}&N_{1,s}^{(q_s)}&N_{1,s+1}^{(q_s)}&\cdots&N_{1,n_1}^{(q_s)}\\
	&\vdots&\vdots&\vdots&\vdots\\
	&N_{k_s,s}^{(q_s)}&N_{k_s,s+1}^{(q_s)}&\cdots&N_{k_s,n_1}^{(q_s)}\\
	\cline{2-5}
	&\multirow{2}{*}{{\large$0_{n_2}$}}&\multicolumn{3}{c}{\multirow{2}{*}{{\huge $\omega$}$_{s}$}}\\
	& & & & \\
	\cline{2-5}
	&\multicolumn{4}{c}{\multirow{3}{*}{$0_{(n_1-s+1)n_2}$}}\\
	& \multicolumn{4}{c}{}\\
	&\multicolumn{4}{c}{}
	\end{array}
	\ema,
	\end{eqnarray}
	where {\Large $\omega_{s}$} is a $(k_{s-1}-k_s)\times (n_1-s)$ rectangular block matrix, and
	\begin{eqnarray}
	\label{fc11}
	&\text{\Large $\omega_{s}$} \tilde{\in} \lin \{N_{1,s}^{(q_s)},\cdots,N_{k_s,s}^{(q_s)}\},
	\end{eqnarray}
	\begin{eqnarray}
	\label{less8}
	r(\bma N_{1,s}^{(q_s)} \\ \vdots \\N_{k_s,s}^{(q_s)} \ema)+r(\text{\Large $\omega_{s}$})=
	r_s+r(\text{\Large $\omega_{s}$})\le r(N_{q_s}) 
	\end{eqnarray}
	hold for any $1\le s\le p$.
	Further, from (\ref{k1lessK}) and (\ref{fc11}), we have 
	\begin{eqnarray}
	\label{KS} 
	Sr({\Large \text{\Large $\omega_{s}$}})\le k_s<K
	\end{eqnarray}
	holds for any $1\le s\le p$. 
	On the other hand, from the construction of $N_{q_s}$, where $1\le s\le p$, one can obtain that
	\begin{eqnarray}
	\label{decompose}
	N_p=N_{q_1}+N_{q_2}+\cdots+N_{q_p},
	\end{eqnarray}
	and by Lemma \ref{lemma1},
	\begin{eqnarray}
	\label{less}
	r(N_{p}) \ge r(N_{q_s})
	\end{eqnarray}
	holds for any $1\le s\le p$.
	
	By (\ref{decompose}), we have decomposed $N_p$ in (\ref{Np}) into the sum of $N_{q_1},\cdots,N_{q_s}$. Note that if $r_1+\cdots+r_i=n_2$ for $1\le i<p$  in $N_p$, then $N_{q_{i+1}},\cdots,N_{q_p}$ disappear. We next consider the partial transpose of system $B$ of each $N_{q_s}$. For $s=1$, from (\ref{Np11}), we have
	\begin{eqnarray}
	\label{Np1T}
	N_{q_1}^{\G_B}
	=&&\bma
	\begin{array}{c|ccc}
	{N_{1,1}^{(q_1)}}^T&{N_{1,2}^{(q_1)}}^T&\cdots&{N_{1,n_1}^{(q_1)}}^T\\
	\vdots&\vdots&\ddots&\vdots\\
	{N_{k_1,1}^{(q_1)}}^T&{N_{k_1,2}^{(q_1)}}^T&\cdots&{N_{k_1,n_1}^{(q_1)}}^T\\
	\hline
	\multirow{2}{*}{{\large$0_{n_2}^{\G_B}$}}&\multicolumn{3}{c}{\multirow{2}{*}{{\huge $\omega$}$_{1}$$^{\G_B}$}}\\
	& &
	\end{array}
	\ema=
	\bma
	\begin{array}{cccc}
	
	*_{r_1}^T&*_{r_1}^T&\multirow{2}{*}{$\cdots$}&*_{r_1}^T\\
	0_{n_2-r_1}^T&0_{n_2-r_1}^T& &0_{n_2-r_1}^T\\
	\hline
	\vdots&\vdots&\ddots&\vdots\\
	\hline
	*_{r_1}^T&*_{r_1}^T&\multirow{2}{*}{$\cdots$}&*_{r_1}^T\\
	0_{n_2-r_1}^T&0_{n_2-r_1}^T& &0_{n_2-r_1}^T\\
	\hline
	\multicolumn{1}{c|}{\multirow{2}{*}{{\large$0_{n_2}^{\G_B}$}}}&\multicolumn{3}{c}{\multirow{2}{*}{{\huge $\omega$}$_{1}$$^{\G_B}$}}\\
	\multicolumn{1}{c|}{}& & & 
	\end{array}
	\ema.
	\end{eqnarray}
	Using Lemma \ref{lemma1} and (\ref{Np1T}), we have
	\begin{eqnarray}
	\label{less3}
	r(N_{q_1}^{\G_B})
	\le k_1\cdot r_1+r( \text{\Large $\omega_{1}$}^{\G_B}).
	\end{eqnarray}
	Similar to $N_{q_1}$, from (\ref{Nqs}), for any $1\le s \le p$, one can show that 
	\begin{eqnarray}
	\label{NpsT}
	\notag
	N_{q_s}^{\G_B}
	&&=\bma 
	\begin{array}{c|c|ccc}
	\multirow{7}{*}{{\large$0_{(s-1)n_2}^{\G_B}$}}&{N_{1,s}^{(q_s)}}^T&{N_{1,s+1}^{(q_s)}}^T&\cdots&{N_{1,n_1}^{(q_s)}}^T\\
	&\vdots&\vdots&\ddots&\vdots\\
	&{N_{k_s,s}^{(q_s)}}^T&{N_{k_s,s+1}^{(q_s)}}^T&\cdots&{N_{k_s,n_1}^{(q_s)}}^T\\
	\cline{2-5}
	&\multirow{2}{*}{$0_{n_2}^{\G_B}$}&\multicolumn{3}{c}{\multirow{2}{*}{{\huge $\omega$}$_{s}$$^{\G_B}$}}\\
	& & & & \\
	\cline{2-5}
	&\multicolumn{4}{c}{\multirow{3}{*}{$0_{(n_1-s_1+1)n_2}^{\G_B}$}}\\
	& \multicolumn{4}{c}{}\\
	&\multicolumn{4}{c}{}
	\end{array}
	\ema \\
	&&=\bma
	\begin{array}{c|cccc}
	\multirow{4}{*}{$0_{n_2}^T$}&0_{\sum_{i=1}^{s-1}r_i}^T&0_{\sum_{i=1}^{s-1}r_i}^T&\multirow{4}{*}{$\cdots$}&0_{\sum_{i=1}^{s-1}r_i}^T\\
	&\multirow{2}{*}{$*_{r_s}^T$}&\multirow{2}{*}{$*_{r_s}^T$}& &\multirow{2}{*}{$*_{r_s}^T$}\\
	& & & & \\
	&0_{n_2-\sum_{i=1}^{s}r_i}^T&0_{n_2-\sum_{i=1}^{s}r_i}^T& &0_{n_2-\sum_{i=1}^{s}r_i}^T\\
	\hline
	\vdots&\vdots&\vdots&\ddots&\vdots\\
	\hline
	\multirow{4}{*}{$0_{n_2}^T$}&0_{\sum_{i=1}^{s-1}r_i}^T&0_{\sum_{i=1}^{s-1}r_i}^T&\multirow{4}{*}{$\cdots$}&0_{\sum_{i=1}^{s-1}r_i}^T\\
	&\multirow{2}{*}{$*_{r_s}^T$}&\multirow{2}{*}{$*_{r_s}^T$}& &\multirow{2}{*}{$*_{r_s}^T$}\\
	& & & & \\
	&0_{n_2-\sum_{i=1}^{s}r_i}^T&0_{n_2-\sum_{i=1}^{s}r_i}^T& &0_{n_2-\sum_{i=1}^{s}r_i}^T\\
	\hline
	\multirow{5}{*}{$0_{(s-1)n_2}^{\G_B}$}&\multicolumn{1}{c|}{\multirow{2}{*}{$0_{n_2}^{\G_B}$}}&\multicolumn{3}{c}{\multirow{2}{*}{{\huge $\omega$}$_{s}$$^{\G_B}$}}\\
	&\multicolumn{1}{c|}{} & & & \\
	\cline{2-5}
	&\multicolumn{4}{c}{\multirow{3}{*}{$0_{(n_1-s_1+1)n_2}^{\G_B}$}} \\
	&\multicolumn{1}{c}{}  \\
	&\multicolumn{1}{c}{}
	\end{array}
	\ema,
	\end{eqnarray}
	and hence
	\begin{eqnarray}
	\label{less13}
	r(N_{q_s}^{\G_B})
	\le k_s\cdot r_s+r( \text{\Large $\omega_{s}$}^{\G_B}).
	\end{eqnarray} 
	From (\ref{KS}), we obtain that the Schmidt rank of {\Large $\omega_{s}$} is at most $K-1$, where {\Large $\omega_{s}$} is from (\ref{Nqs}). Recalling the assumption of induction that Conjecture \ref{cj:1} holds for any matrix of Schmidt rank at most $K-1$, hence for any $1\le s \le p$, we have
	\begin{eqnarray}
	\label{less14}
	r(\text{\Large $\omega_{s}$}^{\G_B})\le Sr(\text{\Large $\omega_{s}$})\cdot r(\text{\Large $\omega_{s}$}).
	\end{eqnarray} 
	Further, from (\ref{less8}), (\ref{KS}), (\ref{less13}) and (\ref{less14}) we obtain that
	\begin{eqnarray}
	\label{less15}
	\notag
	r(N_{q_s}^{\G_B})&&\le k_s\cdot r_s+Sr(\text{\Large $\omega_{s}$})\cdot r(\text{\Large $\omega_{s}$}) \\
	\notag
	&&\le k_s\cdot (r_s+r(\text{\Large $\omega_{s}$}))\\
	&&\le k_s\cdot r(N_{q_s})
	\end{eqnarray}
	holds for any $1\le s \le p$.
	On the other hand, from (\ref{decompose}), we have
	\begin{eqnarray}
	\label{decompose1}
	N_p^{\G_B}=N_{q_1}^{\G_B}+N_{q_2}^{\G_B}+\cdots+N_{q_p}^{\G_B},
	\end{eqnarray}
	and by Lemma \ref{lemma1},
	\begin{eqnarray}
	\label{less1}
	r(N_p^{\G_B})\le r(N_{q_1}^{\G_B})+r(N_{q_2}^{\G_B})+\cdots+r(N_{q_p}^{\G_B}).
	\end{eqnarray}
	From (\ref{K}), (\ref{less}), (\ref{less15}) and (\ref{less1}), we have
	\begin{eqnarray}
	\label{finalless}
	\notag
	r(N_p^{\G_B})&&\le k_1\cdot r(N_{q_1})+k_2\cdot r(N_{q_2})+\cdots+k_p\cdot r(N_{q_p})\\
	\notag
	&&\le k_1\cdot r(N_{p})+k_2\cdot r(N_{p})+\cdots+k_p\cdot r(N_{p})\\
	\notag
	&&=(k_1+k_2+\cdots+k_p)\cdot r(N_{p})\\
	&&=K\cdot r(N_{p}).
	\end{eqnarray}
	This implies Conjecture \ref{cj:1} holds for $N_p$ in (\ref{Np}), and for $M$ in (\ref{M}) up to local equivalence. 
	
	We have shown by induction that Conjecture \ref{cj:1} holds.
	Hence we have finished the proof.
\end{proof}

{\textbf{Example}} To illustrate our proof, we present a boundary case as an example.
Suppose $M$ in (\ref{M}) has $m_1\cdot n_1$ linearly independent blocks. In this case, we have $p=n_1$, $m_1=k_1=k_2=\cdots=k_p$ and $A_1,\cdots,A_p$ disappear in $N$ in (\ref{N}). Further, {\Large $\omega_{s}$} disappears in $N_{q_s}$ in (\ref{Nqs}), $1\le s\le p$. Therefore from (\ref{less8}) and (\ref{NpsT}) we have $r(N_{q_s})^{\G_B} \le  k_s\cdot r_s \le k_s\cdot r(N_{q_s})$ for any $1\le s\le p$.  This implies inequality (\ref{less15}) holds, and further (\ref{finalless}) holds. Hence $M$ satisfies Conjecture \ref{cj:1} up to local equivalence.

\section{Application}
\label{sec:app}

In this section, we apply Theorem \ref{thm:main}. Firstly, we know that there are three inequalities for a tripartite state $\r_{ABC}$ in terms of the 0-entropy $S_0(A)$ of state $\r_A$ \cite{chl14}. Using Theorem \ref{thm:main} the inequalities become
\begin{eqnarray}
\label{eq:0entropy}
&&
S_0(A)+S_0(B)\ge S_0(AB),
\\&&
\label{eq:0entropy-1}	
S_0(AB)+S_0(AC)+S_0(BC)\ge 2S_0(A),
\\&&	
\label{eq:0entropy-2}
S_0(AB)+S_0(AC)\ge S_0(BC).
\end{eqnarray}
Note that we have omitted the inequality 
\begin{eqnarray}
\label{eq:0entropy-3}
S_0(AB)+S_0(AC)\ge S_0(A),	
\end{eqnarray}
obtained in \cite{chl14}, which is a corollary of \eqref{eq:0entropy-1} and \eqref{eq:0entropy-2}. So we have established a complete picture of the four-party linear inequalities in terms of the $0$-entropy.

Next we also point out that, the inequalities \eqref{eq:0entropy}-\eqref{eq:0entropy-2} are necessary conditions by which three bipartite states $\r_{AB}$, $\r_{AC}$ and $\r_{BC}$ are from a tripartite state. For example, consider three bipartite states $\r_{AB}={1\over d^2}I_d \ox I_d$, $\r_{BC}=\r_{AC}=\proj{\ps}$ where
$\ket{\ps}={1\over \sqrt d}\sum^d_{i=1} \ket{ii}$. One can verify that they share the same one-party reduced density operators, $r_{AB}=d^2$ and
$r_{AC}=r_{BC}=1$, so $r_{AB}>r_{AC}r_{BC}$. Thus, the three bipartite states are not from any tripartite state by Theorem \ref{thm:main}. It shows novel understanding to the long-standing marginal problem. 

Third, we extend the inequality in \eqref{eq:0entropy} to the multipartite system as follows.
\begin{eqnarray}
\label{eq:0entropy-ext}
\sum^n_{j=1}
S_0(A_j)	
\ge 
S_0(A_1...A_n).
\end{eqnarray}
We omit the proof as it is similar to that of \eqref{eq:0entropy}.
In the following we extend \eqref{eq:0entropy-1} and
 \eqref{eq:0entropy-2} as follows.
\begin{lemma}
	\label{le:ab.bc.cd>=ad}
	Let $A_{n+1}\equiv A_1$. Given an $n$-partite mixed state of systems $A_1,...,A_n$, we have 
	\begin{eqnarray}
	&&
	\label{eq:0entropy-1-ext}
	\sum^{n}_{j=1}
	S_0(A_jA_{j+1})	
	\ge 2S_0(A_1),
	\\&&
	\label{eq:0entropy-2-ext}
	\sum^{n-1}_{j=1}
	S_0(A_jA_{j+1})	
	\ge S_0(A_1A_n),
	\\&&
	\label{eq:0entropy-3-ext}
	S_0(A_1A_2A_3)+
	S_0(A_1A_2A_4)+
	S_0(A_1A_3A_4)
	\ge 
	S_0(A_2A_3A_4),
	\\ \notag
	\\&&
	\label{eq:0entropy-4-ext}
	S_0(A_1A_2A_3A_4)+S_0(A_1A_2A_3A_5)+S_0(A_1A_2A_4A_5)+S_0(A_1A_3A_4A_5)\ge S_0(A_2A_3A_4A_5), \notag \\
	\\&&
	\label{eq:0entropy-n-ext}
	\sum^{n}_{k=1}S_0(A_1\cdots A_{k-1}A_{k+1}\cdots A_{n}) \geq S_0(A_1). 
	\end{eqnarray} 
\end{lemma}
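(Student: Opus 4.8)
The plan is to derive all five inequalities from the three basic three-party facts \eqref{eq:0entropy}--\eqref{eq:0entropy-2} together with their corollary \eqref{eq:0entropy-3}, exploiting the observation that each of these holds \emph{verbatim} when $A,B,C$ are replaced by disjoint \emph{groups} of the parties $A_1,\dots,A_n$. Indeed, every marginal that appears is itself a marginal of the given multipartite state, so after tracing out the spectator systems we obtain a genuine tripartite state and may invoke Theorem \ref{thm:main} (and its corollaries). I will work throughout in the additive $0$-entropy form, writing the main inequality as $S_0(XY)+S_0(XZ)\ge S_0(YZ)$ and \eqref{eq:0entropy-3} as $S_0(XY)+S_0(XZ)\ge S_0(X)$ for grouped $X,Y,Z$, and I will repeatedly use that $S_0$ is nonnegative, being the logarithm of a matrix rank.

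First I would prove the chain inequality \eqref{eq:0entropy-2-ext} by induction on $n$, the base case $n=3$ being exactly \eqref{eq:0entropy-2} (take $X=A_2,\ Y=A_1,\ Z=A_3$). For the inductive step, apply the main inequality to the grouped triple $(X,Y,Z)=(A_{n-1},A_1,A_n)$ to get $S_0(A_1A_{n-1})+S_0(A_{n-1}A_n)\ge S_0(A_1A_n)$, and combine this with the hypothesis $\sum_{j=1}^{n-2}S_0(A_jA_{j+1})\ge S_0(A_1A_{n-1})$; adding $S_0(A_{n-1}A_n)$ and chaining the two bounds yields \eqref{eq:0entropy-2-ext}. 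Next I would prove the cyclic inequality \eqref{eq:0entropy-1-ext} by induction on $n$, with base case $n=3$ equal to \eqref{eq:0entropy-1}. For the step I peel off the vertex $A_n$: the main inequality on $(X,Y,Z)=(A_n,A_{n-1},A_1)$ gives $S_0(A_{n-1}A_n)+S_0(A_nA_1)\ge S_0(A_{n-1}A_1)$, which replaces the two cycle edges meeting at $A_n$ by the single edge $A_{n-1}A_1$ and thereby bounds the $n$-cycle sum below by the $(n-1)$-cycle sum on $A_1,\dots,A_{n-1}$; the induction hypothesis then finishes it.

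The remaining inequalities \eqref{eq:0entropy-3-ext}, \eqref{eq:0entropy-4-ext} and \eqref{eq:0entropy-n-ext} I would handle by a common \emph{merging} induction. Writing $\widehat{A_k}$ for the marginal on all parties but $A_k$, note that \eqref{eq:0entropy-3-ext} and \eqref{eq:0entropy-4-ext} are the $n=4$ and $n=5$ cases of the uniform statement $\sum_{k=2}^{n}S_0(\widehat{A_k})\ge S_0(\widehat{A_1})$, whose base case $n=3$ is again the main inequality $S_0(A_1A_2)+S_0(A_1A_3)\ge S_0(A_2A_3)$. For the step I merge the last two parties into a super-party $\widetilde A=A_{n-1}A_n$ and apply the induction hypothesis to the $n-1$ parties $A_1,\dots,A_{n-2},\widetilde A$; unwinding the marginals shows this reproduces $S_0(\widehat{A_k})$ for $2\le k\le n-2$ and the right-hand side $S_0(\widehat{A_1})$ verbatim, while the term indexed by $\widetilde A$ equals $S_0(A_1\cdots A_{n-2})$. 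This last term is then dominated by \eqref{eq:0entropy-3} with shared group $A_1\cdots A_{n-2}$ and singletons $A_{n-1},A_n$, namely $S_0(\widehat{A_{n-1}})+S_0(\widehat{A_n})\ge S_0(A_1\cdots A_{n-2})$, which closes the induction. For \eqref{eq:0entropy-n-ext}, which carries the extra left-hand term $S_0(\widehat{A_1})$ and the single-party right-hand side $S_0(A_1)$, the identical merging step applies with \eqref{eq:0entropy-n-ext} itself as the hypothesis; its base case $n=2$ reads $S_0(A_2)+S_0(A_1)\ge S_0(A_1)$ and holds because $S_0\ge 0$.

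The step I expect to be the main obstacle is this merging induction. Unlike the chain and cycle bounds, \eqref{eq:0entropy-3-ext}--\eqref{eq:0entropy-n-ext} cannot be obtained by substituting groups of parties into any single one of the basic inequalities; the argument genuinely needs \emph{two} of them per inductive step, with \eqref{eq:0entropy-3} (and not the main inequality alone) doing the work of collapsing the sum of two $(n-1)$-party marginal entropies down to $S_0(A_1\cdots A_{n-2})$. The delicate point is to verify that merging $A_{n-1}$ and $A_n$ into $\widetilde A$ matches the marginals of the $(n-1)$-party instance term by term, so that the induction hypothesis applies unchanged; the chain and cycle inductions, by contrast, are routine once the grouped form of the main inequality is in hand.
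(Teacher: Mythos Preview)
Your argument is correct. For \eqref{eq:0entropy-1-ext} and \eqref{eq:0entropy-2-ext} you do essentially what the paper does: repeatedly collapse an edge of the chain/cycle using the grouped form of Theorem~\ref{thm:main} until the three-party base case \eqref{eq:0entropy-1} or \eqref{eq:0entropy-2} applies.

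For \eqref{eq:0entropy-3-ext}, \eqref{eq:0entropy-4-ext} and \eqref{eq:0entropy-n-ext} your route genuinely differs from the paper's. You set up a single merging induction that proves the uniform statement $\sum_{k=2}^{n}S_0(\widehat{A_k})\ge S_0(\widehat{A_1})$ for all $n\ge3$, of which \eqref{eq:0entropy-3-ext} and \eqref{eq:0entropy-4-ext} are the cases $n=4,5$; the same device handles \eqref{eq:0entropy-n-ext}. The paper instead treats each of these separately and ad hoc: \eqref{eq:0entropy-3-ext} by \eqref{eq:0entropy-3} followed by Theorem~\ref{thm:main}; \eqref{eq:0entropy-4-ext} by two applications of \eqref{eq:0entropy-2} followed by subadditivity \eqref{eq:0entropy}; and \eqref{eq:0entropy-n-ext} by summing the grouped \eqref{eq:0entropy-2} around the cycle to get $2\sum_k S_0(\widehat{A_k})\ge\sum_j S_0(A_jA_{j+1})$ and then invoking \eqref{eq:0entropy-1-ext}. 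Your approach buys a clean general inequality that the paper does not state, at the cost of an extra layer of induction; the paper's proofs are shorter for the specific cases actually claimed and, for \eqref{eq:0entropy-n-ext}, tie the result back to the already-established cycle bound rather than starting a fresh induction.
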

\begin{proof}
	Firstly we prove \eqref{eq:0entropy-1-ext}. Using Theorem \ref{thm:main} we have $S_0(A_1A_2)+S_0(A_2A_3)\ge S_0(A_1A_3)$. Using the same idea one can show that 
	\begin{eqnarray}
	\notag
	&&
	\sum^{n}_{j=1}
	S_0(A_jA_{j+1})	
	\\ \ge && 
	S_0(A_1A_{n-1})+S_0(A_{n-1}A_n)
	+S_0(A_nA_1)
	\ge 2S_0(A_1),
	\end{eqnarray}
	where the last inequality follows from \eqref{eq:0entropy-1}. We have proven \eqref{eq:0entropy-1-ext}. 
	
	Next we prove \eqref{eq:0entropy-2-ext}. We apply the induction to $n$. If $n=2$ then the assertion holds. Suppose it holds for $n-1$. We have  
	\begin{eqnarray}
	\notag
	\sum^{n-1}_{j=1}
	S_0(A_jA_{j+1})
	=&& 
	\sum^{n-2}_{j=1}
	S_0(A_jA_{j+1})	
	+
	S_0(A_{n-1}A_n)
	\\
	\notag
	\ge &&
	S_0(A_1A_{n-1})	
	+
	S_0(A_{n-1}A_n)
	\\\ge &&
	S_0(A_1A_n),	
	\end{eqnarray} 	
	where the first inequality follows from the induction hypothesis on $n-1$, and the last inequality follows from Theorem \ref{thm:main}. So the assertion holds for $n$. The induction implies that \eqref{eq:0entropy-2-ext} holds.
	
	Further we prove \eqref{eq:0entropy-3-ext}. We have
	\begin{eqnarray}
\notag
	&&
	S_0(A_1A_2A_3)+
	S_0(A_1A_2A_4)+
	S_0(A_1A_3A_4)
	\\
\notag	
	 \ge && 
	S_0(A_1A_2)+
	S_0(A_1A_3A_4)
	\\ \ge && 
	S_0(A_2A_3A_4).
	\end{eqnarray}
	Here the first inequality follows from \eqref{eq:0entropy-3}, and the second inequality follows from Theorem \ref{thm:main}.
	
	Similarly we prove \eqref{eq:0entropy-4-ext}. We can obtain 
	\begin{eqnarray}
	\notag
	&&
	S_0(A_1A_2A_3A_4)+S_0(A_1A_2A_3A_5)+S_0(A_1A_2A_4A_5)+S_0(A_1A_3A_4A_5)
	\\ 
	\notag
	\ge &&
	S_0(A_4A_5)+S_0(A_2A_3)
	\\ \ge &&
	S_0(A_2A_3A_4A_5),
	\end{eqnarray}
	where the first inequality follows from \eqref{eq:0entropy-2} and the second inequality follows from \eqref{eq:0entropy}.
	
	Next we prove \eqref{eq:0entropy-n-ext}. One can show that 
	\begin{eqnarray}
	\notag
	&&
	2\sum^{n}_{k=1}S_0(A_1\cdots A_{k-1}A_{k+1}\cdots A_{n}) 
	\\
	\notag
	 \ge &&
	\sum^{n}_{j=1}
	S_0(A_jA_{j+1})	
	\\ \ge &&
	2S_0(A_1).
	\end{eqnarray}
	Here the first inequality follows from \eqref{eq:0entropy-2}, and the second inequality follows from \eqref{eq:0entropy-1-ext}. So we have $\sum^{n}_{k=1}S_0(A_1\cdots A_{k-1}A_{k+1}\cdots A_{n}) \ge S_0(A_1)$. This completes the proof.
\end{proof}

We have constructed a few novel inequalities every multipartite state satisfy. It shed new light to the marginal problem for multipartite system, as well as the understanding of von Neumann entropy. 

Fourth, we investigate the condition when the inequality in \eqref{eq:main-inequality} is saturated. That is, if $r(\r_{AB}) \cdot r(\r_{AC})\ge r(\r_{BC})$ then what is the condition by which  $r(\r_{AB}) \cdot r(\r_{AC})= r(\r_{BC})$? We partially answer the problem as follows.

\begin{lemma}
\label{le:saturate}
(i) Suppose $\r_{ABC}$ is a tripartite pure state. Then the condition is $r(\r_B)\cdot r(\r_C)=r(\r_A)$. 

(ii) Suppose $\r_{ABC}=\r_A\otimes\r_{BC}$ is a  tripartite mixed state. Then the condition is $r(\r_A)=1$ and $r(\r_{BC})=r(\r_B)\cdot r(\r_C)$. 

(iii) Suppose $\r_{ABC}=\r_{B}\otimes\r_{AC}$ is a  tripartite mixed state. Then the condition is $r(\r_{A}) \cdot r(\r_{AC})= r(\r_{C})$.

(iv) Suppose $\r_{ABC}=\r_{C}\otimes\r_{AB}$ is a  tripartite mixed state. Then the condition is $r(\r_{A}) \cdot r(\r_{AB})= r(\r_{B})$.

(v) Suppose $\r_{ABC}$ is a tripartite PPT state. Then the condition is $ r(\r_{AB})= r(\r_{B})$, $ r(\r_{AC})= r(\r_{C})$ and $r(\r_{B})\cdot r(\r_{C})=r(\r_{BC})$. 
\end{lemma}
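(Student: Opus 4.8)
The plan is to treat all five parts as biconditionals and reduce each to Theorem~\ref{thm:main} together with two auxiliary inputs: the subadditivity bound $r(\r_B)\cdot r(\r_C)\ge r(\r_{BC})$ from \eqref{eq:0entropy}, and the elementary identities relating the ranks of the two-party marginals under the structural hypothesis of each item. The common mechanism is an integer-squeeze: whenever I can sandwich $r(\r_{BC})$ between $r(\r_{AB})\cdot r(\r_{AC})$ from above (by \eqref{eq:main-inequality}) and a product of marginal ranks, the saturation hypothesis forces every intermediate inequality to become an equality, and positivity/integrality of the ranks then pins down each factor. Sufficiency (the reverse direction of each ``iff'') is in every case immediate by re-multiplying, so the work is in the necessity direction.

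For (i) I would invoke the pure-state identities $r(\r_{AB})=r(\r_C)$, $r(\r_{AC})=r(\r_B)$, $r(\r_{BC})=r(\r_A)$, which hold across complementary bipartitions of a pure $\r_{ABC}$; substituting into $r(\r_{AB})\cdot r(\r_{AC})=r(\r_{BC})$ turns it verbatim into $r(\r_B)\cdot r(\r_C)=r(\r_A)$. For (ii)--(iv) I would first compute the marginals of the product state. For (ii), $\r_{AB}=\r_A\ox\r_B$ and $\r_{AC}=\r_A\ox\r_C$, so $r(\r_{AB})\cdot r(\r_{AC})=r(\r_A)^2\,r(\r_B)\,r(\r_C)$; combined with $r(\r_{BC})\le r(\r_B)r(\r_C)\le r(\r_A)^2 r(\r_B)r(\r_C)$, saturation forces $r(\r_A)^2=1$ (hence $r(\r_A)=1$) and $r(\r_{BC})=r(\r_B)r(\r_C)$. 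Items (iii) and (iv) are the same computation with the roles of the factorizing party changed: there one two-party marginal factorizes while $\r_{BC}=\r_B\ox\r_C$ has rank $r(\r_B)r(\r_C)$, and cancelling the common nonzero marginal rank yields the stated equality (e.g.\ $r(\r_A)\cdot r(\r_{AC})=r(\r_C)$ in (iii)). Each of these is a short substitution.

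The substantive part is (v). Here the PPT hypothesis on $\r_{ABC}$ is inherited by the two-party marginals: since partial transposition on $A$ commutes with tracing out $C$ (resp.\ $B$), $\r_{AB}^{\G_A}=\tr_C(\r_{ABC}^{\G_A})\ge0$ and $\r_{AC}^{\G_A}=\tr_B(\r_{ABC}^{\G_A})\ge0$, so both $\r_{AB}$ and $\r_{AC}$ are bipartite PPT states. I would then use the rank property of bipartite PPT states, $r(\sigma_{XY})\ge\max\{r(\sigma_X),r(\sigma_Y)\}$, to get $r(\r_{AB})\ge r(\r_B)$ and $r(\r_{AC})\ge r(\r_C)$. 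Feeding these into the squeeze gives
\begin{eqnarray}
\notag
r(\r_{BC})=r(\r_{AB})\cdot r(\r_{AC})\ge r(\r_B)\cdot r(\r_C)\ge r(\r_{BC}),
\end{eqnarray}
where the first equality is saturation, the middle step is the PPT bound, and the last is \eqref{eq:0entropy}. Equality throughout, together with $r(\r_{AB})\ge r(\r_B)$, $r(\r_{AC})\ge r(\r_C)$ and integrality, forces $r(\r_{AB})=r(\r_B)$, $r(\r_{AC})=r(\r_C)$ and $r(\r_{BC})=r(\r_B)r(\r_C)$, exactly the claimed condition.

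I expect the main obstacle to be justifying the bipartite PPT rank bound $r(\sigma_{XY})\ge\max\{r(\sigma_X),r(\sigma_Y)\}$ self-containedly, since it is the one input not already in the excerpt. My plan is to reduce to the case of full-rank marginals by restricting to $\supp(\sigma_X)\ox\supp(\sigma_Y)$ (which contains the range of $\sigma_{XY}$ and preserves PPT under the induced invertible local maps), and then to rule out the maximally-entangled-type rank collapse that would otherwise let a small-rank $\sigma_{XY}$ have a full-rank marginal---equivalently, to argue that $\ker\sigma_{XY}$ cannot be so large as to drop the marginal rank without producing a negative eigenvalue of $\sigma_{XY}^{\G_Y}$. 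If a clean internal derivation proves awkward, the cleanest route is to invoke this as an established structural fact about PPT states.
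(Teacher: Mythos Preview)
Your proposal is correct and follows essentially the same approach as the paper: the pure-state Schmidt identities for (i), the product-marginal computation plus subadditivity squeeze for (ii)--(iv), and for (v) the PPT local-rank lower bound $r(\sigma_{XY})\ge\max\{r(\sigma_X),r(\sigma_Y)\}$ combined with \eqref{eq:0entropy} to sandwich $r(\r_{BC})$. The paper simply cites the PPT rank bound as known rather than re-deriving it, and does not spell out the factor-by-factor equality in (v); your remark about ``integrality'' there is harmless but unneeded, since positivity alone gives $a\ge c,\ b\ge d,\ ab=cd\Rightarrow a=c,\ b=d$.
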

\begin{proof}
(i) The assertion follows from the definition of tripartite pure states. 

(ii) We have $r(\r_A)^2\cdot r(\r_B)\cdot r(\r_C)=r(\r_{AB}) \cdot r(\r_{AC})= r(\r_{BC})\le r(\r_B)\cdot r(\r_C)$. So  assertion (i) holds.

(iii) The assertion can be proven straightforwardly.

(iv) The assertion can be proven using the idea of the proof of (ii).

(v) It is known that the rank of a bipartite PPT state is lower bounded by that of any one of its reduced density operators. Hence \eqref{eq:0entropy} implies  
\begin{eqnarray}
r(\r_{AB})
\cdot 
r(\r_{AC})	
\ge 
r(\r_B) \cdot r(\r_C)
\ge 
r(\r_{BC}).
\end{eqnarray}
If $r(\r_{AB}) \cdot r(\r_{AC})= r(\r_{BC})$ then we obtain assertion (v).
\end{proof}

We point out that the states $\r_{ABC}$ satisfying the conditions of this lemma exist, as we show by the following examples. In (i) we assume $\r_{ABC}=\proj{0,0,0}$. Actually the example applies to all of the five cases in Lemma \ref{le:saturate}, and we shall show more non-trivial examples. In (ii) we assume $\r_{ABC}=\proj{0}_A\otimes\proj{\b}_B\otimes\proj{\g}_C$ where $\b$ and $\g$ are arbitrary states. In (iii), we assume that $\r_{AC}$ is a pure state. In (iv) we assume that $\r_{AB}$ is a pure state. In (iv), we assume that $\r_{ABC}=\proj{0}_A\otimes\proj{\b}_B\otimes\proj{\g}_C$ where $\b$ and $\g$ are arbitrary states. 

\section{Conclusions}
\label{sec:con}

We have proven the inequality $r(\r_{AB}) \cdot r(\r_{AC})\ge r(\r_{BC})$ for any tripartite state $\r_{ABC}$ by proving an equivalent conjecture as well as the construction of a novel canonical form of bipartite matrices. So we have a complete picture of the four-party linear inequalities in terms of the $0$-entropy. We also have applied our results to the marginal problem,  extended the inequality to the scenario of multipartite systems, and discussed the condition when the inequality in \eqref{eq:main-inequality} is saturated. 

We believe that the  canonical form in Theorem \ref{canonical} might be applied to more quantum-information problems concerning bipartite systems. Besides some open problems from this paper are as follows.
\begin{enumerate}
\item 
Whether the lower bounds in the inequalities \eqref{eq:0entropy-ext}-\eqref{eq:0entropy-3-ext} are tight enough is unknown.

\item
Can we extend the inequality \eqref{eq:main-inequality} to multipartite system, apart from \eqref{eq:0entropy-ext}-\eqref{eq:0entropy-3-ext}? 

\item
Although we have provided some conditions by which the inequality in \eqref{eq:main-inequality} is saturated in Lemma \ref{le:saturate}, a general condition is still missing.

\end{enumerate}

\section*{Acknowledgments}
	\label{sec:ack}	

Authors were supported by the NNSF of China (Grant No. 11871089), and the Fundamental Research Funds for the Central Universities (Grant Nos. ZG216S2110).

\bibliographystyle{unsrt}

\bibliography{Thefinalanswer} 

\end{document}